%% file: main.tex
\documentclass[letterpaper, 10 pt, conference]{ieeeconf}
\IEEEoverridecommandlockouts
\newcommand{\comment}[1]{}

\newcommand{\be}[0]{\begin{equation}}
\newcommand{\ee}[0]{\end{equation}}
\newcommand{\ben}[0]{\begin{equation*}}
\newcommand{\een}[0]{\end{equation*}}
\newcommand{\bena}[0]{\begin{eqnarray*}}
\newcommand{\eena}[0]{\end{eqnarray*}}
\newcommand{\bea}[0]{\begin{eqnarray}}
\newcommand{\eea}[0]{\end{eqnarray}}

\usepackage[english]{babel}
\usepackage{amsthm}
\usepackage[version=4]{mhchem}
\usepackage{siunitx}
\usepackage{array}
\usepackage{longtable}
\usepackage{nomencl}
\usepackage{mathrsfs} 
\usepackage{mathtools} 
\usepackage[list=true]{subcaption}
\usepackage{optidef}
\usepackage{multicol}
\usepackage{svg}
\usepackage{lipsum}
\usepackage{calrsfs}
\usepackage{scalerel}
\usepackage{stackengine}
\usepackage{booktabs}
\usepackage{caption}
\usepackage{enumerate}
\usepackage{cite}
\usepackage{amssymb,amsfonts}
\usepackage{algorithmic}
\usepackage{graphicx}
\usepackage{textcomp}

\DeclareMathAlphabet{\mathcal}{OMS}{cmsy}{m}{n}

\theoremstyle{definition}
\newtheorem{definition}{Definition}
\newtheorem{theorem}{Theorem}

\newtheorem{remark}{Remark}

\begin{document}

\title{Closed-Loop Model-Based Control Barrier Functions with Application to Robust Flight Envelope Protection}
\author{Johannes Autenrieb$^{1}$, Mark Spiller$^{1}$, Peter A. Fisher$^{2}$, Junhyeok Yoon$^{2}$,\\ Spilios Theodoulis$^{3}$ and  Anuradha Annaswamy$^{2}$
\thanks{$^{1}$ German Aerospace Center (DLR), Institute of Flight Systems, 38108 Braunschweig, Germany.
(email: \texttt{johannes.autenrieb@dlr.de, mark.spiller@dlr.de})}
\thanks{$^{2}$ Massachusetts Institute of Technology, Department of Mechanical Engineering, Cambridge, MA 02139, USA.
(email: \texttt{pafisher@mit.edu,yjunh01@mit.edu, aanna@mit.edu})}
\thanks{$^{3}$ Delft University of Technology, Faculty of Aerospace Engineering, 2628 Delft, The Netherlands.
(email: \texttt{s.theodoulis@tudelft.nl})}
}

\maketitle

\begin{abstract}
\input{Sections/abstract}
\end{abstract}

\section{Introduction}
\label{sec:Introduction}
\input{Sections/Introduction}

\section{Preliminaries}
\label{sec:Preliminaries}
\input{Sections/Preliminaries}

\section{Problem Formulation}
\label{sec:Problem Formulatio}
\input{Sections/Problem_Formulation}

\section{Input-Level Safety Filters \& Altered Closed-Loop Dynamics}
\label{sec:Motivating Example}
\input{Sections/Motivation}

\section{Safe Reference Signal Generation via Closed-Loop Model-Based Control Barrier Functions}
\label{sec:Closed_Loop_Model_CBFs}
\input{Sections/Closed_Loop_Model_CBFs}

\section{Application of CLM-CBFs on Flight Envelope Protection for Missile Systems}
\label{sec:Application of CLM-CBFs on Flight Envelope Protection for Missile Systems}
\input{Sections/Application}

\section{Conclusion}
\label{sec:Conclusion}
\input{Sections/Conclusion}

\bibliographystyle{IEEEtran}
\bibliography{./sample}

\end{document}

%% file: Sections/abstract.tex
Ensuring operation of aerospace systems within prescribed flight envelope limits is a fundamental requirement for modern flight control architectures. Flight envelope protection aims to prevent violations of aerodynamic and structural constraints, thereby mitigating risks such as stall and excessive load factors. Control barrier functions (CBFs) have emerged as a principled tool for enforcing safety by ensuring that the system state remains within a prescribed safe set. In most existing approaches, safety constraints are imposed at the control input-level based on an open-loop model of the system. While this open-loop model-based CBF formulation enables modular design, it may alter the closed-loop system dynamics, potentially compromising robustness guarantees and complicating integration into existing flight control architectures. This paper proposes a closed-loop model-based control barrier function (CLM-CBF) framework for flight envelope protection. The key idea is to enforce safety at the reference-level using an explicit model of the closed-loop system, thereby preserving the stability and robustness properties of the underlying controller. This formulation enables safety filtering without modifying the control law, facilitating modular integration and retrofitting into existing systems. 

%% file: Sections/Introduction.tex
Recent advances in aerospace engineering have led to novel vehicle configurations and expanding application domains across both civil and defense sectors. Modern aerospace systems, such as high-speed missiles, hypersonic glide vehicles, and next-generation autonomous aircraft, operate at the limits of aerodynamics, propulsion, and structural integrity, under extreme conditions including high dynamic pressures, aggressive maneuvers, and limited control authority. 

In this context, Flight Envelope Protection (FEP) plays a central role by ensuring that the system remains within admissible operational limits defined by aerodynamic, structural, and performance constraints. The flight envelope is typically characterized by quantities such as velocity, load factor, altitude, and angle of attack, which must be respected to prevent critical failure modes including structural overload, aerodynamic stall, or loss of control \cite{Oudin2017,Stougie2024-ym}. More generally, FEP represents a class of safety-critical control problems that can be interpreted as maintaining the system state within a prescribed admissible set, a perspective that also arises in related applications such as collision avoidance and cooperative guidance.

\begin{figure}
\centering
\includegraphics[width=\columnwidth]{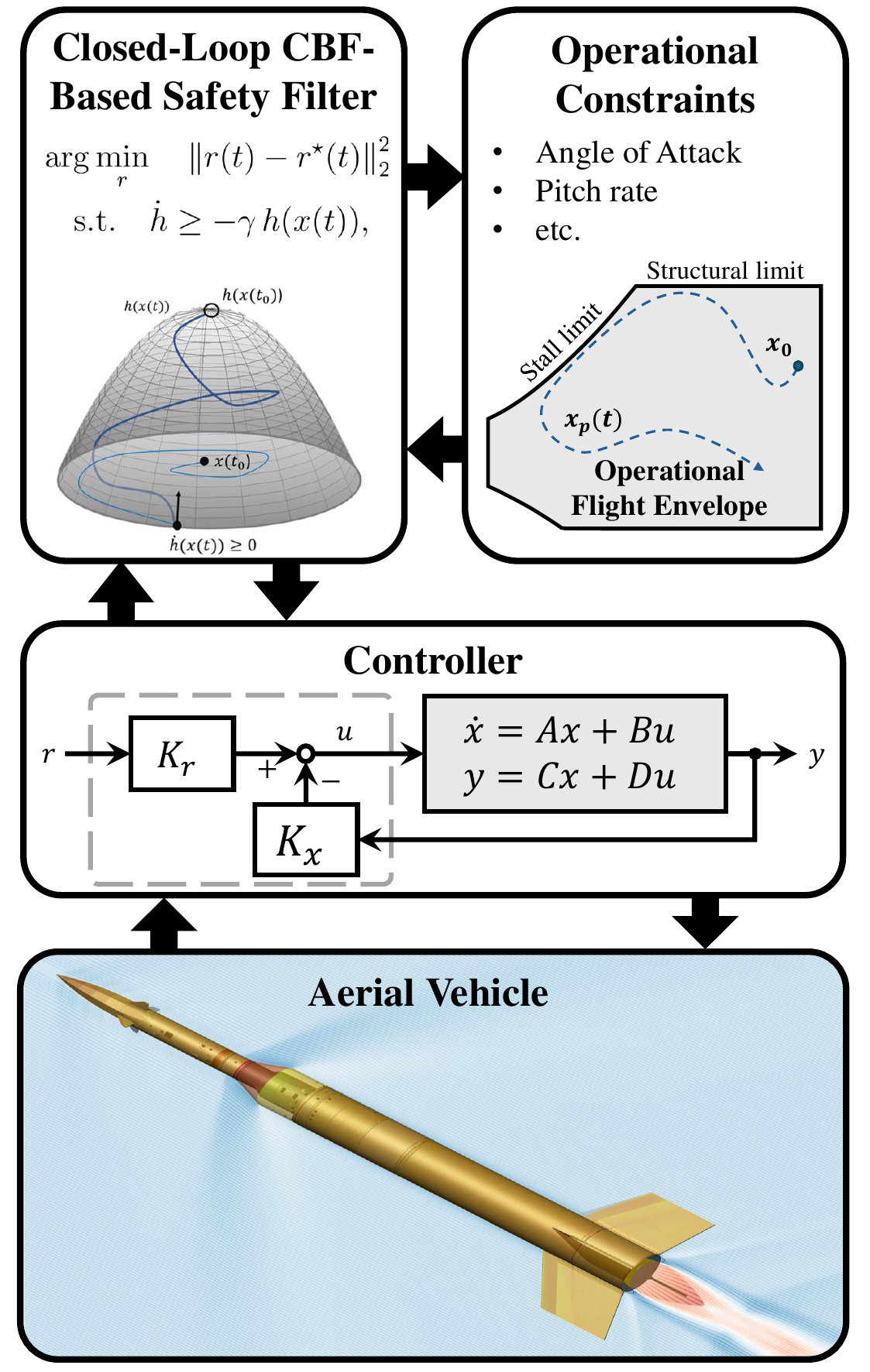}
\caption{Illustration of the proposed closed-loop model-based control barrier function safety filter concept.}
\label{fig:overview_concept}
\end{figure}

In current aerospace practice, FEP is commonly implemented at the attitude control level using reference filtering techniques such as reference and command filtering \cite{Tang2009,Falkena2010,Seo2017,Lombaerts2017}. Reference clipping enforces safety by saturating commanded signals within predefined bounds, but neglects the influence of closed-loop system dynamics, which may lead to constraint violations during transient phases and offers no formal guarantees under disturbances or model mismatch \cite{Stougie2024-ym,Steffensen2019}. Model-based command filtering partially addresses these limitations by incorporating system dynamics into the reference shaping and can provide improved transient behavior and, under certain assumptions, forward invariance guarantees \cite{Grondman2018,Autenrieb_2024b,Autenrieb2024}. However, these approaches are sensitive to modeling inaccuracies and external disturbances. As a consequence, conservative tuning is often required to ensure safety, which directly limits achievable performance, particularly for high-speed aerospace systems \cite{Falkena2010}.

Control Barrier Functions (CBFs) have emerged as a systematic framework for enforcing state constraints through forward invariance \cite{Ames_2014, Autenrieb2023b, Fisher2026,  Autenrieb2025ACC}. In most practical implementations, CBFs are realized via optimization-based safety filters formulated as quadratic programs, where a performance-oriented controller generates a nominal input that is subsequently modified to satisfy the safety constraint. Since these constraints are typically derived using the open-loop system dynamics, this paradigm can be interpreted as an open-loop model-based CBF (OLM-CBF) approach. A key advantage of OLM-CBF methods is their modularity, as the safety filter can be designed independently of the controller structure. Despite their theoretical appeal, OLM-CBF approaches exhibit significant limitations in aerospace flight control applications. In particular, modifying the control input at runtime may alter the closed-loop behavior of the system and compromise carefully designed properties such as stability margins, transient performance, and robustness guarantees. For high-performance flight control systems, where controllers are tuned to meet stringent specifications, such modifications can lead to undesirable characteristics. Furthermore, input-level safety filters may interact adversely with internal controller dynamics, for instance by inducing integrator windup or disrupting gain-scheduled architectures. From a practical perspective, these effects complicate the integration of OLM-CBF-based safety filters into existing, certified flight control systems, where modifications of the closed-loop dynamics are costly and difficult to validate.

Motivated by these limitations, this work adopts a fundamentally different perspective on safety-critical control. Instead of enforcing safety at the control input-level using open-loop dynamics, we propose to enforce safety at the reference-level using an explicit model of the closed-loop system. This leads to the concept of a {closed-loop model-based Control Barrier Function} (CLM-CBF) framework. In this formulation, the safety filter operates on the commanded reference signal and leverages the known closed-loop mapping from reference to state, allowing safety constraints to be evaluated with respect to the actual closed-loop behavior. This shift in perspective provides two key advantages. First, the stabilizing structure and performance characteristics of the underlying controller, such as gain and phase margins, transient response, and robustness properties, are preserved, since the generated signal of the nominal control law itself is not modified. Second, the approach is naturally compatible with existing flight control architectures, enabling modular integration and facilitating retrofitting without requiring redesign or re-certification of the baseline controller. 

The proposed approach is applied to the flight envelope protection problem of a longitudinal missile system with realistic aerodynamic characteristics and actuator constraints. The resulting safety filter is formulated as a quadratic program that computes the least-invasive modification of the commanded reference signal while ensuring strict state constraint satisfaction. Numerical results demonstrate that the proposed CLM-CBF approach reliably prevents constraint violations during aggressive maneuvers, while preserving robustness guarantees.

%% file: Sections/Preliminaries.tex
Consider a control-affine nonlinear system 
\begin{align}
    \dot{\mathbf{x}(t)} &= \mathbf{f}(\mathbf{x}(t)) + \mathbf{g}(\mathbf{x}(t)) \mathbf{u}(t), \label{NonlinearPlant1}
\end{align}
where \( \mathbf{x}(t) \in \chi \subset R^n\), \( \mathbf{u}(t) \in U \subset R^m \), \( \mathbf{y} \in \mathbb{R}^p \), and \( \mathbf{f}: \chi \to \mathbb{R}^n \) and \( \mathbf{g}: \chi \to \mathbb{R}^{n \times m} \) are sufficiently smooth functions. To define safety, we consider a continuously differentiable function \( h: \chi \rightarrow \mathbb{R} \) and a set \( S \) defined as the zero-superlevel set of \( h \), yielding:
\begin{equation}
    \label{Safe_set_1}
    S \triangleq \big\{ \mathbf{x}(t) \in \chi \mid h(\mathbf{x}(t)) \geq 0 \big\}.
\end{equation}



We introduce the notion of a {control barrier function} (CBF) such that its existence allows the system to be rendered safe with respect to \( S \) using a control input \( \mathbf{u}(t) \) \cite{ames2016control}.
\begin{definition}[CBF, \cite{ames2016control}]
Let \( S \subset \chi \) be the zero-superlevel set of a continuously differentiable function \( h: \chi \rightarrow \mathbb{R} \). The function \( h \) is a CBF for \( S \) for all \( \mathbf{x}(t) \in S \), if there exists a class \( \mathcal{K}_{\infty} \) function \( \alpha(h(\mathbf{x}(t))) \) such that for the dynamics defined in \eqref{NonlinearPlant1} we obtain:
\begin{equation}
    \label{ControlBarrierFunction_simple}
    \sup_{\mathbf{u}(t)\in U} [L_{\mathbf{f}} h(\mathbf{x}(t)) + L_{\mathbf{g}} h(\mathbf{x}(t)) \mathbf{u}(t)]  \geq -\alpha(h(\mathbf{x}(t))),
\end{equation}
\end{definition}

\begin{theorem}
\label{theorem_LCBF}
Given a set \( S \subset \chi \), defined via the associated CBF as in \eqref{Safe_set_1}, any Lipschitz continuous controller \( \mathbf{k}(\mathbf{x}(t)) \in K_{S}(\mathbf{x}(t)) \) with 
\begin{equation*}
\begin{aligned}
    K_{S} (\mathbf{x}(t)) = \big\{ \mathbf{u}(t) \in U :& L_{\mathbf{f}} h(\mathbf{x}(t)) +\\ &L_{\mathbf{g}} h(\mathbf{x}(t)) \mathbf{u}(t) + \alpha(h(\mathbf{x}(t))) \ge 0 \big\}
\end{aligned}
\label{definition_safe_controller}
\end{equation*}
renders the system \eqref{NonlinearPlant1} forward invariant within \( S \) \cite{XU2015}. 

One way to construct a controller satisfying \eqref{definition_safe_controller} is through a quadratic program-based safety filter applied at the control input-level, as proposed in~\cite{Ames_2014}: 
\begin{equation}
\begin{aligned}
\mathbf{u}(t) = \arg\min_{\mathbf{u}(t)\in U} \quad
& \|\mathbf{u}(t) - \mathbf{u}(t)^\star\|_2^2 \\
\text{s.t.}\quad
& L_{\mathbf{f}} h(\mathbf{x}(t))+\\
& L_{\mathbf{g}} h(\mathbf{x}(t))\,\mathbf{u}(t)
  \ge -\alpha\!\left(h(\mathbf{x}(t))\right),
\end{aligned}
\label{eq:qp_cbf}
\end{equation}
where $\mathbf{u}(t)^\star$ is a performance-oriented, but potentially not safe, control input.
\end{theorem}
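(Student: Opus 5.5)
The argument is a Nagumo-type comparison: under any controller in $K_S$, the scalar $h(\mathbf{x}(t))$ obeys a differential inequality that keeps it nonnegative whenever it starts nonnegative.

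\emph{Closed loop.} Fix a Lipschitz controller $\mathbf{k}(\mathbf{x}) \in K_S(\mathbf{x})$. The closed-loop vector field $\mathbf{f}(\mathbf{x}) + \mathbf{g}(\mathbf{x})\mathbf{k}(\mathbf{x})$ is locally Lipschitz, because $\mathbf{f}$ and $\mathbf{g}$ are smooth and $\mathbf{k}$ is Lipschitz. Hence, for every $\mathbf{x}(0) \in S$, there is a unique solution $\mathbf{x}(t)$ on a maximal interval $[0, \tau_{\max})$.

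\emph{Differential inequality.} Along this solution the chain rule gives
\begin{equation*}
\dot h(\mathbf{x}(t)) = L_{\mathbf{f}} h(\mathbf{x}(t)) + L_{\mathbf{g}} h(\mathbf{x}(t))\,\mathbf{k}(\mathbf{x}(t)) \ge -\alpha\big(h(\mathbf{x}(t))\big),
\end{equation*}
where the inequality is membership of $\mathbf{k}(\mathbf{x}(t))$ in $K_S(\mathbf{x}(t))$. Two points need care here.
\begin{itemize}
\item The inequality must hold for $\mathbf{x}$ in a neighborhood of $S$, not only inside $S$. Either $K_S$ is read as defined on an open set containing $S$, or one argues only at the first exit time, where the state is still on $\partial S$.
\item $\alpha$ must be defined for negative arguments. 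I would extend it to an extended class-$\mathcal{K}$ function, with $\alpha(0)=0$, and take it locally Lipschitz so that the comparison ODE has unique solutions.
\end{itemize}

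\emph{Comparison.} Let $y(t)$ solve $\dot y = -\alpha(y)$ with $y(0) = h(\mathbf{x}(0)) \ge 0$. Since $\alpha(0)=0$, the point $y=0$ is an equilibrium. By uniqueness, a solution starting at $y(0)\ge 0$ cannot cross zero, so $y(t) \ge 0$ for all $t \ge 0$. The comparison lemma (Khalil) then gives $h(\mathbf{x}(t)) \ge y(t) \ge 0$ on $[0,\tau_{\max})$. Therefore $\mathbf{x}(t) \in S$ on the whole existence interval, which is forward invariance of $S$.

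\emph{Alternative route.} One can instead use Nagumo's theorem. On $\partial S$ we have $h=0$, so $\dot h \ge -\alpha(0) = 0$, and the vector field points into $S$. This needs a regularity condition, namely $\nabla h \neq 0$ on $\partial S$, and that is exactly why the comparison route is preferable.

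\emph{Main obstacle.} The statement leaves the technical hypotheses implicit: the domain on which the CBF inequality holds, local Lipschitzness of $\alpha$, and possible finite escape time. Here I would cite \cite{XU2015} and \cite{ames2016control} and state that invariance holds on the maximal interval of existence.

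\emph{QP remark.} The QP \eqref{eq:qp_cbf} yields a controller in $K_S$ by construction, since it is feasible by the CBF condition \eqref{ControlBarrierFunction_simple}. Its Lipschitz continuity follows from the explicit closed-form solution of a single-constraint QP when $L_{\mathbf{g}}h \neq 0$, for $U=\mathbb{R}^m$. If $U$ is a strict subset of $\mathbb{R}^m$, Lipschitzness needs a separate argument.
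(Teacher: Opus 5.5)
Your comparison-lemma argument is correct under the hypotheses you add: the CBF inequality holds on an open set $D\supset S$, and $\alpha$ is extended class-$\mathcal{K}$ and locally Lipschitz. It is not, however, the paper's route. The paper gives no proof of this statement and simply cites \cite{XU2015}. The argument it does write out, for the closed-loop analogue Theorem~\ref{thm:robust_cl_cbf}, is the Nagumo boundary argument: $\dot h\ge-\alpha(0)=0$ on $\{h=0\}$, so the field is subtangential.

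The two routes trade different hypotheses. Nagumo uses information only on $\{h=0\}$, so it matches the paper's literal hypotheses (inequality only for $\mathbf{x}\in S$, and $\alpha\in\mathcal{K}_\infty$ defined on $[0,\infty)$). Its cost is the regular-value assumption $\nabla h\neq 0$ on $\{h=0\}$, which holds trivially for the affine barriers used later in the paper. Your route needs no regularity of $\nabla h$ and gives more: the quantitative margin $h(\mathbf{x}(t))\ge y(t)$ and, when $D\supsetneq S$, attraction of nearby states to $S$. It gets this only after strengthening both the domain of $K_S$ and $\alpha$. So the comparison route is not ``preferable'' outright; each route adds a different hypothesis.

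If $D\neq\chi$, you also need a short bootstrap step. The comparison applies while $\mathbf{x}(t)\in D$, and it keeps $\mathbf{x}(t)\in S\subset D$. Openness of $D$ then rules out a first exit from $D$ before $\tau_{\max}$.

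Drop the ``argue only at the first exit time'' option in your first bullet. If the inequality is known only on $S$, the exit time gives you $\dot h\ge 0$ at a single instant. That does not stop $h$ from turning negative immediately afterwards, and the comparison lemma cannot be continued past that instant. For example, take $h(x)=-x^2$, $\dot x=u$, $U=\mathbb{R}$, $\mathbf{k}\equiv 1$. Then $S=\{0\}$, and $K_S(0)=\mathbb{R}$ because $\mathbf{f}=0$ and $L_{\mathbf{g}}h(0)=\alpha(h(0))=0$. Yet $x(t)=t$ leaves $S$ at once. The same example shows that the statement as written genuinely needs one of the two extra hypotheses: the neighborhood condition, or the regular-value condition together with Nagumo.
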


The formulation in \eqref{eq:qp_cbf} represents the standard paradigm for CBF-based safety filtering, where safety is enforced at the control input-level. In this setting, a nominal or performance-oriented controller generates a desired input $\mathbf{u}(t)^\star$, which is subsequently modified by a quadratic program to ensure satisfaction of the safety constraint. A key characteristic of this approach is that the barrier condition is derived solely from the open-loop system dynamics, i.e., the vector fields $\mathbf{f}(\mathbf{x}(t))$ and $\mathbf{g}(\mathbf{x}(t))$, without explicitly accounting for the structure or dynamics of the underlying controller. Consequently, the safety filter operates independently of the closed-loop system behavior and treats the controller as a black-box input generator. In the remainder of this paper, we refer to this class of methods as open-loop model-based CBFs (OLM-CBFs). A conceptual illustration of this architecture is shown in Fig.~\ref{fig:OLM_CBF_architecture}.

\begin{figure*}
    \centering
    \includegraphics[width=0.8\textwidth]{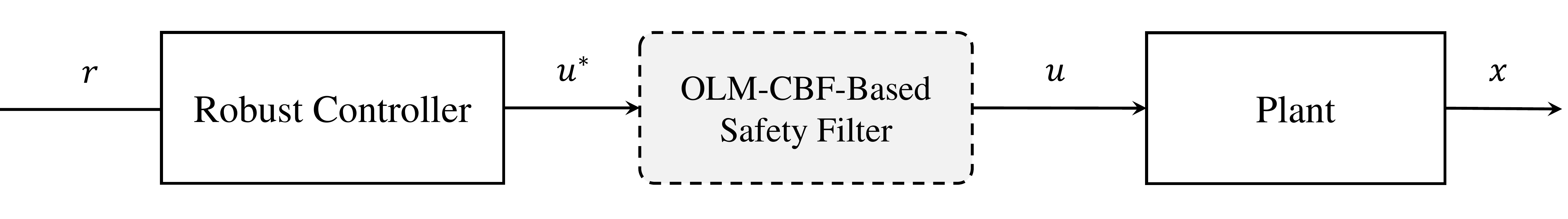}
    \caption{Conceptual control architecture with an open-loop model-based Control Barrier Function (OLM-CBF).}
    \label{fig:OLM_CBF_architecture}
\end{figure*}

%% file: Sections/Problem_Formulation.tex
\subsection{Longitudinal Missile Dynamics and Design Model}
\label{Sect: Nonlinear_Missile_Model}


This study considers an axially symmetric skid-to-turn tail-controlled missile system. Under the assumption of a rapidly stabilized roll channel, the missile motion can be decomposed into two perpendicular and weakly coupled channels (pitch and yaw). In the following, we focus on the longitudinal (pitch) dynamics, which capture the dominant behavior relevant for control design and flight envelope protection.  The missile pitch dynamics model utilized in this study is an adjusted version of the dynamical model presented in \cite{Hwang_2017}, obtained from a full six-degree-of-freedom formulation by removing non-relevant dynamics. The missile dynamics are governed by nonlinear aerodynamic forces and moments, which depend on the operating condition
\begin{equation}
\sigma = \begin{bmatrix} M & h \end{bmatrix}^\top,
\end{equation}
where $M$ denotes the Mach number and $h$ the altitude. For the considered flight regime, the dominant dynamics are given by the short-period motion, which can be described by the angle of attack $\alpha(t)$ and pitch rate $q(t)$. The corresponding nonlinear dynamics are given by
\begin{align}
    \dot{\alpha}(t) &= \frac{QS}{mV} \left( C_{Z,\alpha}(\sigma)\alpha(t)+C_{Z,\delta}(\sigma)\delta(t)\right) + q(t), \label{eqn:alpha_dot_clean} \\
    \dot{q}(t) &= \frac{QSd}{I_{yy}} \Big(
        C_{M,\alpha}(M)\alpha(t) \notag \\
        &\quad + C_{M,q}(M)\frac{d}{2V}q(t)
        + C_{M,\delta}(M)\delta(t)
    \Big), \label{eqn:q_dot_clean}
\end{align}
where $\delta(t)$ denotes the control fin deflection. The dynamic pressure is given by $Q = \frac{1}{2}\rho V^2$, and the parameters $S$, $d$, $m$, and $I_{yy}$ denote the reference area, diameter, mass, and moment of inertia, respectively. The normal load factor is given by
\begin{equation}
    n_z(t)= \frac{QS}{mg} \left(C_{Z,\alpha}(\sigma)\alpha(t)+C_{Z,\delta}(\sigma)\delta(t)\right),
    \label{eqn:nz_clean}
\end{equation}
which highlights the strong dependence of aerodynamic loads on the angle of attack. For control design, the nonlinear dynamics are linearized around a trim condition $(x_0, \delta_0)$ corresponding to a constant operating point $\sigma_0 = [M_0, h_0]^\top$. Defining perturbation variables
\begin{equation}
x(t) = \begin{bmatrix} \alpha(t) - \alpha_0 \\ q(t) - q_0 \end{bmatrix}, 
\quad
u(t) = \delta(t) - \delta_0,
\end{equation}
the local dynamics are approximated by the linear time-invariant system
\begin{equation}
\label{eq:LTI_short_period}
\dot{x}(t) = A(\sigma_0)x(t) + B(\sigma_0)u(t),
\end{equation}
with
\begin{equation}
A(\sigma_0) =
\begin{bmatrix}
a_{11} & 1 \\
a_{21} & a_{22}
\end{bmatrix},
\quad
B(\sigma_0) =
\begin{bmatrix}
b_1 \\
b_2
\end{bmatrix},
\end{equation}
where the coefficients are given by
\begin{align}
a_{11} &= \frac{QS}{mV} C_{Z,\alpha}(\sigma_0), \quad a_{21} = \frac{QSd}{I_{yy}} C_{M,\alpha}(\sigma_0), \notag \\
a_{22} &= \frac{QSd}{I_{yy}} C_{M,q}(\sigma_0)\frac{d}{2V}, \quad b_1 = \frac{QS}{mV} C_{Z,\delta}(\sigma_0), \notag \\
b_2 &= \frac{QSd}{I_{yy}} C_{M,\delta}(\sigma_0).\notag 
\end{align}

Further, we define our control variable as $\alpha(t)$, resulting in the corresponding control error
\begin{equation}
e(t)= 
r(t)-\alpha(t).
\label{eq:tracking_error}
\end{equation}

\medskip

The linear model \eqref{eq:LTI_short_period} represents the short-period dynamics of the missile and forms the basis for the subsequent controller synthesis and safety filter design. It captures the dominant coupling between angle of attack and pitch rate, which governs both the dynamic response and the aerodynamic load behavior. The numerical values of the aerodynamic coefficients and vehicle parameters used in this study are summarized in Table~\ref{tab:missile_params}. In the present study, the altitude is assumed to be constant at mean sea level (MSL), such that variations in the operating condition are primarily governed by the Mach number. The control input is subject to actuator magnitude and rate constraints, which must be respected during control design and implementation.

\begin{table}[hbt!]
\caption{\label{tab:missile_params} Flight vehicle parameters, aerodynamic coefficients, and constraints.}
\centering
\begin{tabular}{lcc}
\hline
Parameter & Symbol & Value \\
\hline
Velocity & $V$ & 914.00 m/s \\
Air density & $\rho$ & 1.225 kg/m$^3$ \\
Mass & $m$ & 453.00 kg \\
Moment of inertia & $I_{yy}$ & 1407.00 kgm$^2$ \\
Gravity & $g$ & 9.80665 m/s$^2$ \\
Mach number & $M$ & 2.6859 \\
Reference area & $S$ & 0.073 m$^2$ \\
Missile diameter  & $d$ & 0.30 m \\
\hline
Aerodynamic Coefficients & & \\
\hline
$C_{Z,\alpha}$ & & -32.5925 \\
$C_{Z,\delta}$ & & -7.1863 \\
$C_{m,\alpha}$ & & -80.4716 \\
$C_{m,q}$ & & -56.1499 \\
$C_{m,\delta}$ & & -69.6272 \\
\hline
Control Input Constraints & & \\
\hline
Fin deflection limits & $\delta$ & $\pm 30^\circ$ \\
Fin deflection rate limits & $\dot{\delta}$ & $\pm 90^\circ$/s \\
\hline
\end{tabular}
\end{table}


\subsection{Flight Envelope Protection for Longitudinal Missile Dynamics}
\label{sec:Problem_Formulation_Missile}

Based on the longitudinal short-period dynamics introduced in Section~\ref{Sect: Nonlinear_Missile_Model}, the control objective is to track a desired reference signal while ensuring safe operation within an a priori defined flight envelope. As indicated by the nonlinear load expression in \eqref{eqn:nz_clean}, the normal load factor $n_z(t)$ is directly influenced by the angle of attack $\alpha(t)$ and depends on the airspeed $V$ through the dynamic pressure. Consequently, the admissible operating region of the missile is characterized by flight-speed-dependent load constraints arising from aerodynamic and structural limitations. For the considered longitudinal dynamics, the angle of attack $\alpha(t)$ constitutes the dominant safety-critical variable. In particular, since $n_z(t)$ can be expressed as an approximately affine function of $\alpha(t)$, the velocity-dependent load constraints can be conservatively reformulated as bounds on $\alpha(t)$ \cite{Autenrieb_2025}. This allows the flight envelope constraints to be expressed directly in terms of the system state. Accordingly, the FEP problem is formulated as maintaining
\begin{equation}
\label{eq:alpha_constraint}
\alpha_{\min}(\sigma_0) \leq \alpha(t) \leq \alpha_{\max}(\sigma_0),
\end{equation}
where the bounds depend on the operating point $\sigma_0$ and capture aerodynamic and structural limitations. Defining the state vector $\mathbf{x}(t) = [\alpha(t), q(t)]^\top$, the admissible set can be written as
\begin{equation}
\mathcal{S}(\sigma_0)
\triangleq
\left\{
\mathbf{x}(t) \in \mathbb{R}^2 \;\mid\;
\alpha_{\min}(\sigma_0) \leq \alpha(t) \leq \alpha_{\max}(\sigma_0)
\right\}.
\end{equation}
The objective of flight envelope protection is to ensure forward invariance of $\mathcal{S}(\sigma_0)$, i.e.,
\begin{equation}
\mathbf{x}(t_0) \in \mathcal{S}(\sigma_0)
\;\Rightarrow\;
\mathbf{x}(t) \in \mathcal{S}(\sigma_0), \quad \forall t \geq t_0,
\end{equation}
under the closed-loop dynamics resulting from the control design applied to the linearized system \eqref{eq:LTI_short_period}.\\


The problem addressed in this work can therefore be stated as follows: Given the linearized short-period model \eqref{eq:LTI_short_period} with magnitude and rate constraints on the actuators, design a control architecture that ensures tracking performance while guaranteeing forward invariance of the safe set $\mathcal{S}(\sigma_0)$. As shown in the motivating example in Section~\ref{sec:Motivating Example}, input-level CBF-based safety filtering modifies the effective closed-loop dynamics whenever the constraint is active, such that the system is no longer governed by $A_{\mathrm{cl}}(\sigma_0)$. Consequently, nominal stability and performance guarantees are not preserved in general. Motivated by this limitation, the objective is to enforce \eqref{eq:alpha_constraint} while preserving the nominal closed-loop behavior. To this end, a performance-oriented baseline controller is combined with a safety filter acting at the reference-level, modifying $\mathbf{r}(t)$ rather than $\mathbf{u}(t)$.

%% file: Sections/Motivation.tex
While the previous section introduced the standard formulation of control barrier functions and their realization as input-level safety filters, this paradigm inherently relies on an open-loop description of the system dynamics. In particular, the resulting safety constraints are constructed independently of the nominal closed-loop behavior induced by the controller. In the following, we highlight fundamental limitations of the OLM-CBF approach, as defined in \eqref{eq:qp_cbf}, when combined with feedback controllers. To this end, a simple illustrative example is used to demonstrate how the interaction between the safety filter and the nominal controller can induce nontrivial, state-dependent modifications of the closed-loop dynamics, thereby potentially degrading robustness and performance guarantees of the nominal controller.

We consider the following linear time-invariant (LTI) system representing a plant with partially unknown dynamics:
\begin{equation}
\dot{\mathbf{x}}(t)
=
A\mathbf{x}(t)
+
B\mathbf{u}(t),
\label{eq:LTI_example}
\end{equation}
where \(A \in \mathbb{R}^{n \times n}\) is a constant system matrix and \(B \in \mathbb{R}^{n \times m}\) is a known input matrix. For the system \eqref{eq:LTI_example} we consider the following nominal state-feedback controller
\begin{equation}
\mathbf{u}^\star(t) = K_x\mathbf{x}(t) + K_r\mathbf{r}(t),
\label{eq:nominal_controller_example}
\end{equation}
where $K_x$ denotes the feedback gain and $K_r$ the feedforward gain associated with the reference input $r(t)$. The corresponding nominal closed-loop dynamics are defined as
\begin{equation*}
\dot{\mathbf{x}}(t)
=
A_{\mathrm{cl}}\mathbf{x}(t)
+
B_{\mathrm{cl}}\mathbf{r}(t),
\end{equation*}
with
\begin{align*}
A_{\mathrm{cl}} &= A + BK_x,\\
B_{\mathrm{cl}} &= BK_r.
\end{align*}

The local closed-loop dynamics are governed by the eigenstructure of~$A_{\mathrm{cl}}$, with the eigenvalues determining the temporal evolution and the eigenvectors defining the associated modal directions.

As discussed in Section \ref{sec:Preliminaries}, within the standard paradigm for CBF-based state constraint enforcement, safety for \eqref{eq:LTI_example} is imposed at the control input-level via the quadratic program
\begin{equation}
\begin{aligned}
\mathbf{u}(t) = \arg\min_{\mathbf{u}(t)\in U} \quad
& \|\mathbf{u}(t) - \mathbf{u}^\star(t)\|_2^2 \\
\text{s.t.}\quad
& a(\mathbf{x}(t))\,\mathbf{u}(t) + b(\mathbf{x}(t)) \ge 0,
\end{aligned}
\label{eq:qp_cbf_example}
\end{equation}
where
\begin{equation*}
a(\mathbf{x}(t)) = \frac{\partial h}{\partial \mathbf{x}}B,
\qquad
b(\mathbf{x}(t)) = \frac{\partial h}{\partial \mathbf{x}}A\mathbf{x}(t) + \alpha\!\left(h(\mathbf{x}(t))\right).
\end{equation*}

The corresponding Lagrangian associated with \eqref{eq:qp_cbf_example} is given by
\begin{equation*}
\mathcal{L}(\mathbf{u}(t),\lambda)
=
\frac{1}{2}\|\mathbf{u}(t)-\mathbf{u}(t)^\star\|_2^2
-
\lambda\big(a(\mathbf{x}(t))\,\mathbf{u}(t)+b(\mathbf{x}(t))\big),
\end{equation*}
where $\lambda(\mathbf{x}(t)) \ge 0$ denotes the Lagrange multiplier associated with the safety constraint. The explicit solution admits the closed-form expression
\begin{equation}
\mathbf{u}(t)
=
\mathbf{u}(t)^\star(\mathbf{x}(t))
+
\lambda(\mathbf{x}(t))\,a(\mathbf{x}(t))^\top,
\label{eqn:lagrange_depending_controller}
\end{equation}
where the Lagrange multiplier is given by
\begin{equation*}
\lambda(\mathbf{x}(t))
=
\max\!\left\{0,\,
-\frac{\delta(\mathbf{x}(t))}{a(\mathbf{x}(t))a(\mathbf{x}(t))^\top}
\right\},
\end{equation*}
and
\begin{equation*}
\delta(\mathbf{x}(t)) := a(\mathbf{x}(t))\mathbf{u}^\star(t)(\mathbf{x}(t)) + b(\mathbf{x}(t)).
\end{equation*}

Substituting the nominal controller from \eqref{eq:nominal_controller_example} yields
\begin{equation*}
\mathbf{u}(t)
=
K_x\mathbf{x}(t)
+
K_r\mathbf{r}(t)
+
\lambda(\mathbf{x}(t))\,a(\mathbf{x}(t))^\top.
\end{equation*}
Hence, the resulting closed-loop dynamics are
\begin{equation*}
\dot{\mathbf{x}}(t)
=
A_{\mathrm{cl}}\mathbf{x}(t)
+
B_{\mathrm{cl}}\mathbf{r}(t)
+
B\lambda(\mathbf{x})a(\mathbf{x})^\top.
\label{eqn:constraint_depending_closed_Loop}
\end{equation*}

This expression shows directly that whenever the safety constraint is active, i.e., when
\[
\delta(\mathbf{x}(t)) < 0 \quad \Longleftrightarrow \quad \lambda(\mathbf{x}(t)) \neq 0,
\]
the closed-loop vector field is modified by the additional state-dependent term
\[
B\lambda(\mathbf{x}(t))a(\mathbf{x}(t))^\top.
\]

Consequently, the system is no longer governed by the nominal closed-loop matrix $A_{\mathrm{cl}}$. Instead, around an operating point $\mathbf{x}_e$ where the active set remains unchanged and $\lambda(\mathbf{x})$ is locally smooth, the effective local closed-loop Jacobian becomes
\begin{equation}
A_{\mathrm{eff}}(\sigma_0,\mathbf{x}_e)
=
A_{\mathrm{cl}}
+
B
\frac{\partial}{\partial \mathbf{x}}
\big(\lambda(\mathbf{x}(t))a(\mathbf{x}(t))^\top\big)
\Big|_{\mathbf{x}=\mathbf{x}_e}.
\label{eqn:constraint_depending_system_matrix}
\end{equation}

Therefore, the local eigenstructure in regions where the constraint is active is determined by $A_{\mathrm{eff}}(\mathbf{x}_e)$ rather than by the nominal matrix $A_{\mathrm{cl}}$. In contrast, whenever the constraint is inactive ($\lambda(\mathbf{x})=0$), the nominal closed-loop dynamics, and thus the eigenstructure, are exactly preserved.

\medskip

To illustrate this effect, we consider a simple second-order system with double-integrator structure in the form of \eqref{eq:LTI_example}. The state vector is given by $\mathbf{x}(t) = [x_1(t) \;\; x_2(t)]^\top$ and the scalar control input $u$. This model captures an essential second-order behavior and enables a transparent analysis of the interaction between nominal control and safety filtering in the considered application domain. The system matrices are given by
\begin{equation*}
A =
\begin{bmatrix}
0.0 & 1.0\\
25.0 & 2.0
\end{bmatrix}, \qquad
B =
\begin{bmatrix}
0.0\\
20.0
\end{bmatrix}.
\end{equation*}
The nominal controller follows the structure introduced in \eqref{eq:nominal_controller_example} and is designed via pole placement such that the closed-loop poles are located at $\{-6 \pm 3i\}$.

\medskip

For the considered motivating example, safety is imposed via bounds on $x_2(t)$, i.e.,
\begin{equation}
x_{2,\min} \leq x_2(t) \leq x_{2,\max},
\end{equation}
where $x_{2,\min} = -30$ and $x_{2,\max} = 30$. This is enforced through the CBF candidates
\begin{equation*}
h_1(\mathbf{x}(t)) = x_{2,\max} - x_2(t),
\end{equation*}
\begin{equation*}
h_2(\mathbf{x}(t)) = x_2(t) - x_{2,\min},
\end{equation*}
which, together with the OLM-CBF-based safety filter in \eqref{eq:qp_cbf_example}, guarantee forward invariance of the admissible set.

\medskip

\begin{figure}
    \centering
    \includegraphics[width=\columnwidth]{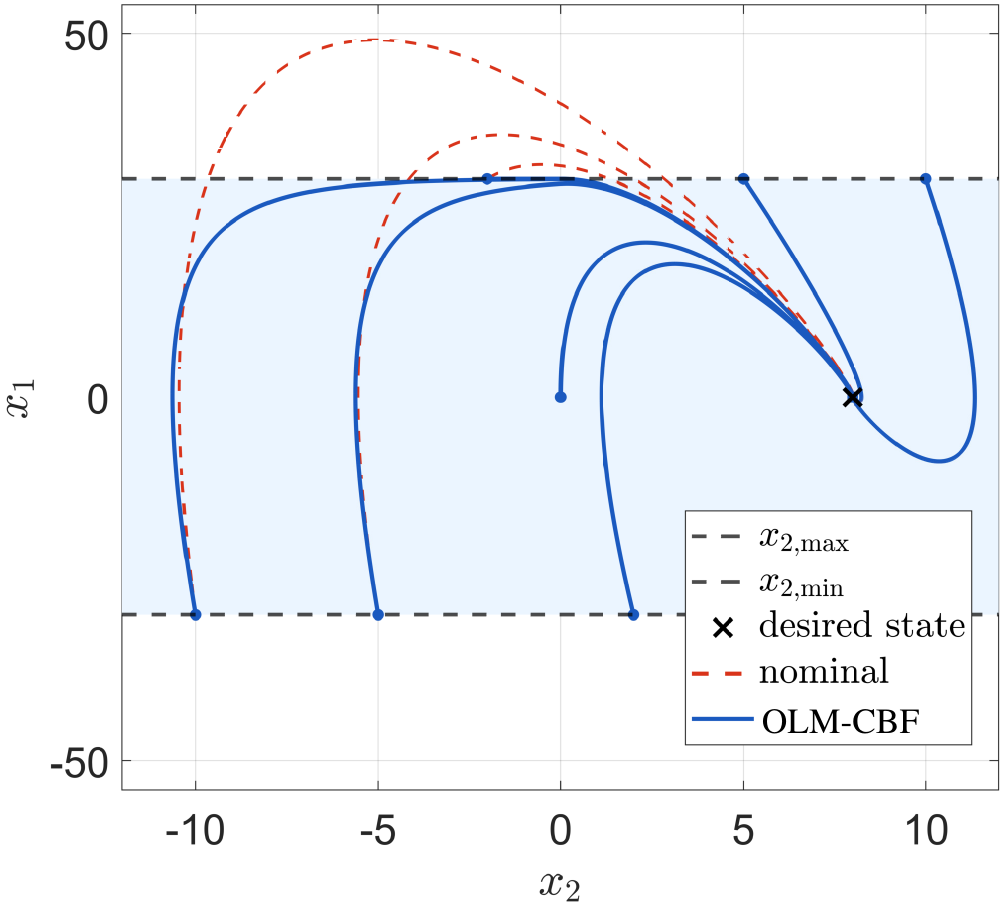}
    \caption{Closed-loop trajectories for multiple initial states with and without an OLM-CBF-based safety filter.}
    \label{fig:phase_plot}
\end{figure}

Fig.~\ref{fig:phase_plot} shows the resulting closed-loop trajectories for different initial states, both comfortably inside the admissible set and on its boundary. The control objective is to steer the system toward the desired equilibrium $\mathbf{x} = [8 , 0]^\top$, which lies well within the safe set. The nominal controller without safety filtering (red dashed) achieves the control objective but violates the pitch rate constraints, as trajectories exceed the prescribed bounds. In contrast, the OLM-CBF-based safety filter (blue) modifies the control input such that all trajectories remain within the admissible set while still converging to the desired equilibrium. These results demonstrate that the OLM-CBF framework effectively enforces safety for the considered scenario. However, as shown in the following, this safety enforcement alters the closed-loop dynamics, which comes at the expense of losing guarantees on robustness margins.

\medskip

To analyze this effect, the system is evaluated at representative states within the safe set. In particular, we distinguish between regions where the safety constraint is inactive and regions where it becomes active, as characterized by the multiplier $\lambda(\mathbf{x}(t))$ in the closed-form solution of the quadratic program. As discussed earlier, the nominal controller is designed such that the closed-loop poles are located at $\{-6 \pm 3i\}$. This design yields gain margins of approximately $8.14$ dB and phase margins of approximately $47.2^\circ$, which are consistent with typical robustness requirements in flight control applications.

\begin{figure}
    \centering
    \includegraphics[width=\columnwidth]{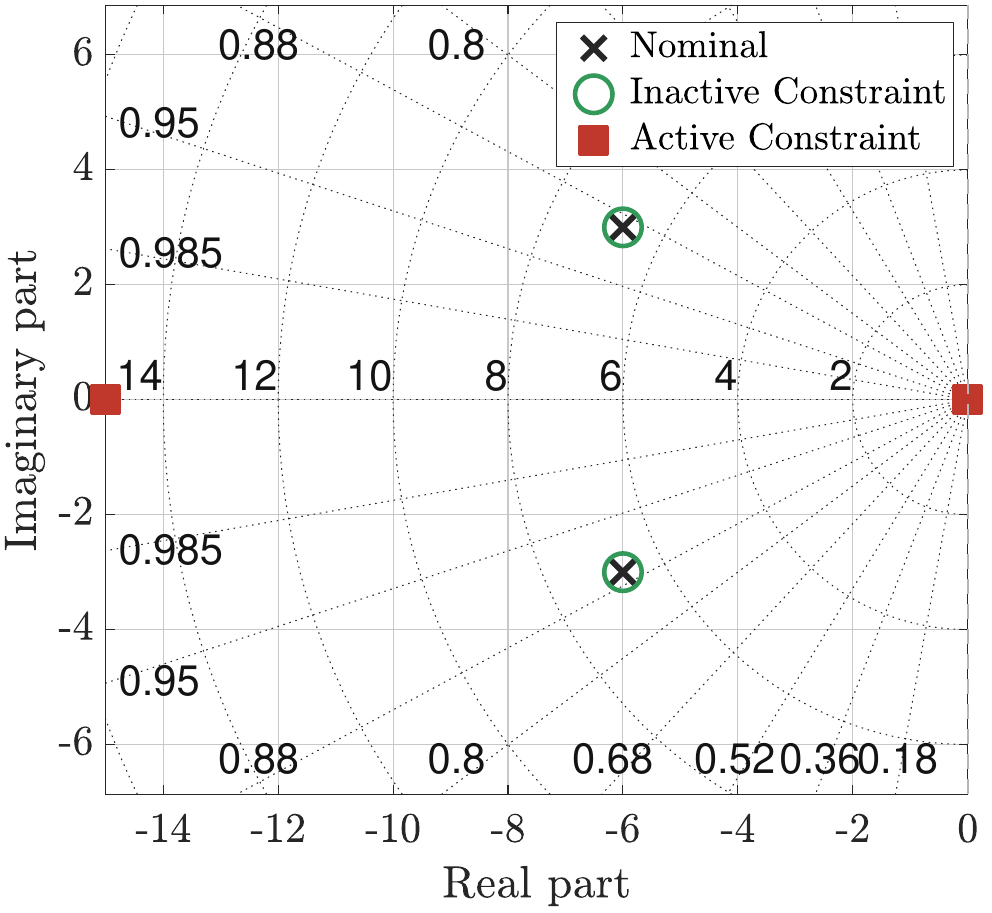}
    \caption{Closed-loop pole locations for nominal, inactive, and active safety filter cases.}
    \label{fig:poles}
\end{figure}

For states well inside the safe set, it can be assumed that the safety constraint remains inactive, i.e., $\lambda(\mathbf{x}(t)) = 0$. In this case, the safety filter does not modify the control input, and the resulting closed-loop dynamics should coincide with the nominal system. This is confirmed by the pole locations shown in Fig.~\ref{fig:poles}, where the inactive case matches the nominal poles. In contrast, for states near the boundary of the safe set, the constraint becomes active, i.e., $\lambda(\mathbf{x}(t)) > 0$. As a result, the control input is modified, leading to a state-dependent alteration of the closed-loop dynamics. As illustrated in Fig.~\ref{fig:poles}, this modification leads to a pronounced shift in the pole locations. While one pole moves further into the left half-plane, the second pole shifts toward the imaginary axis.

\begin{table*}[t]
\centering
\caption{Comparison of closed-loop pole locations and robustness margins for nominal, inactive, and active safety filter cases.}
\label{tab:poles_margins}
\begin{tabular}{lccc}
\toprule
\textbf{Case} & \textbf{Poles} & \textbf{Gain Margin (dB)} & \textbf{Phase Margin (deg)} \\
\midrule
Nominal 
& $-6 \pm 3i$ 
& $\pm 8.14$ 
& $\pm 47.22$ \\

Inactive OLM-CBF 
& $-6 \pm 3i$ 
& $\pm 8.14$ 
& $\pm 47.22$ \\

Active OLM-CBF 
& $0,\,-15$ 
& $0.00$ 
& $0.00$ \\
\bottomrule
\end{tabular}
\end{table*}

The quantitative results summarized in Table~\ref{tab:poles_margins} further clarify this behavior. While the nominal and inactive cases exhibit identical pole locations and robustness margins, the active case shows a complete loss of gain and phase margins. Although the shift in pole locations alone does not necessarily imply problematic behavior, the associated loss of robustness is critical. In particular, the phase margin approaches $0^\circ$ and the gain margin approaches $0$ dB, indicating operation at the boundary of stability and a high sensitivity to model uncertainties and disturbances.


These results demonstrate that, although the OLM-CBF-based safety filter guarantees constraint satisfaction, it can significantly alter the closed-loop dynamics when active. In particular, robustness properties ensured by the nominal controller design are not preserved in general. This highlights a fundamental limitation of input-level safety filtering, as safety enforcement is achieved without explicitly accounting for the resulting closed-loop behavior.


%% file: Sections/Closed_Loop_Model_CBFs.tex
Classical CBF-based approaches enforce safety by directly modifying the control input $u(t)$ through a QP-based safety filter. While this provides formal safety guarantees, such input-level filtering can interfere with the closed-loop behavior of the system. In particular, it may degrade stability properties, induce undesirable interactions with internal controller states (e.g., integral windup), and complicate integration into existing flight control architectures. These limitations are especially critical in aerospace applications, where control systems are carefully tuned and certified, and modifications to the control input can necessitate extensive redesign and validation. To address these challenges, this work adopts a closed-loop model-based perspective. In the following, this concept is formalized, and a CBF-based safety filter operating on the reference signal is derived using closed-loop model information.

\subsection{Robust Closed-Loop Controller Design}
\label{sec:CL_Controller_Design}

Building upon the linearized short-period model introduced in \eqref{eq:LTI_short_period}, a feedback controller is designed to ensure robust stability with required performance specifications. The objective is to obtain a well-behaved closed-loop system that satisfies robustness and performance criteria, while providing a reliable basis for subsequent safety filtering at the reference-level. To this end, a feedback control law of the form
\begin{equation}
u(t) = K_x(\sigma_0) \mathbf{x}(t) + K_r(\sigma_0) \mathbf{r}(t)
\label{eqn:robust_controll_law}
\end{equation}
is considered, where $K_x(\sigma_0)$ denotes the feedback gain and $K_r(\sigma_0)$ the feedforward gain associated with the reference input $r(t)$ around an operating point $\sigma_0$. The controller is designed using a multi-objective robust control framework, combining frequency-domain robustness requirements and performance specifications. In particular, the design enforces robustness margins, including prescribed gain and phase margins, to ensure sufficient tolerance against modeling inaccuracies and unmodeled dynamics. 

We define the following transfer functions 
\begin{align}
e(s)&=N_1(s)r(s),\notag\\
u(s)&=N_2(s)r(s),\notag\\
x(s)&=N_3(s)r(s),\notag
\end{align}
where 
\begin{align}
e(s)&=\mathcal{L}\{e(t)\}, 
\qquad u(s)=\mathcal{L}\{u(t)\}, \notag\\
x(s)&=\mathcal{L}\{x(t)\},
\qquad r(s)=\mathcal{L}\{r(t)\} \notag
\end{align}
are the Laplace transformed signals of the tracking error, the control input, the state vector and the reference signal, respectively.
Further, we introduce the signal-based $\mathcal{H}_\infty$ performance criteria
\begin{equation}
J(K_x(\sigma_0),K_r(\sigma_0)) 
=
\|N(K_x(\sigma_0),K_r(\sigma_0))\|_\infty.
\end{equation}
with
\begin{align}
N(s)
=
\begin{bmatrix}
W_1(s)N_1(s) & 0 & 0\\
0 & W_2(s)N_2(s) & 0\\
0 & 0 & W_3(s)I_{2\times2}N_3(s)\\
\end{bmatrix}\notag
\end{align}
and scalar weightings $W_1(s), W_2(s), W_3(s)\in\mathcal{R}(s)$ selected from the set of proper rational transfer functions $\mathcal{R}(s)$. 
By designing the weighting function $W_1(s)$, we can define the desired controller bandwidth, within which accurate tracking can be achieved. To limit the amount of control action and avoid excessively high gains, we choose $W_2(s)$ accordingly. Finally, to avoid high-frequency responses of the system and to attenuate noise we design $W_3(s)$.
To make the $\mathcal{H}_\infty$ control design robust, we formulate the following constrained optimization problem
\begin{equation}
\begin{aligned}
J^{*}=\min_{K_x,\,K_r} \quad & J(K_x(\sigma_0),K_r(\sigma_0)) \\
\text{s.t.} \quad 
& \mathrm{GM} \geq \mathrm{GM}_{\min}, \\
& \mathrm{PM} \geq \mathrm{PM}_{\min},
\label{eq:multi_objective_optimization}
\end{aligned}
\end{equation}
which enforces the required minimum gain and phase margins, $\mathrm{GM}_{\min}$ and $\mathrm{PM}_{\min}$, respectively. In general, the optimization problem (\ref{eq:multi_objective_optimization}) is nonlinear, non-convex and non-smooth but can be recast into a multi-objective optimization problem and solved using non-smooth programming techniques \cite{apkarian2014multi}.
\begin{figure*}[!tbp]
    \centering
    \begin{subfigure}[b]{0.35\textwidth}
        \centering
        \includegraphics[width=1\textwidth]{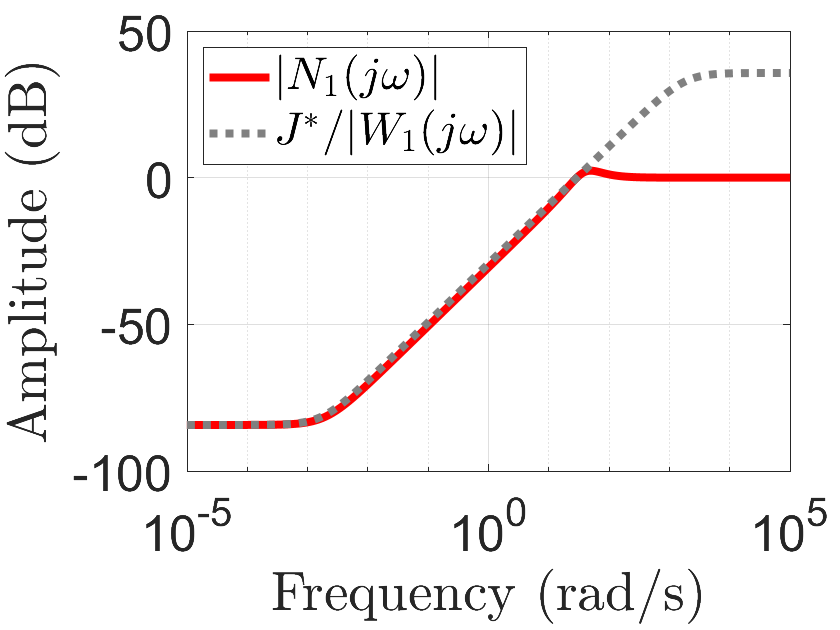}
        \caption{Error sensitivity\label{fig:robust_controller_evaluation_a}}
    \end{subfigure}
    \begin{subfigure}[b]{0.35\textwidth}
        \centering
        \includegraphics[width=1\textwidth]{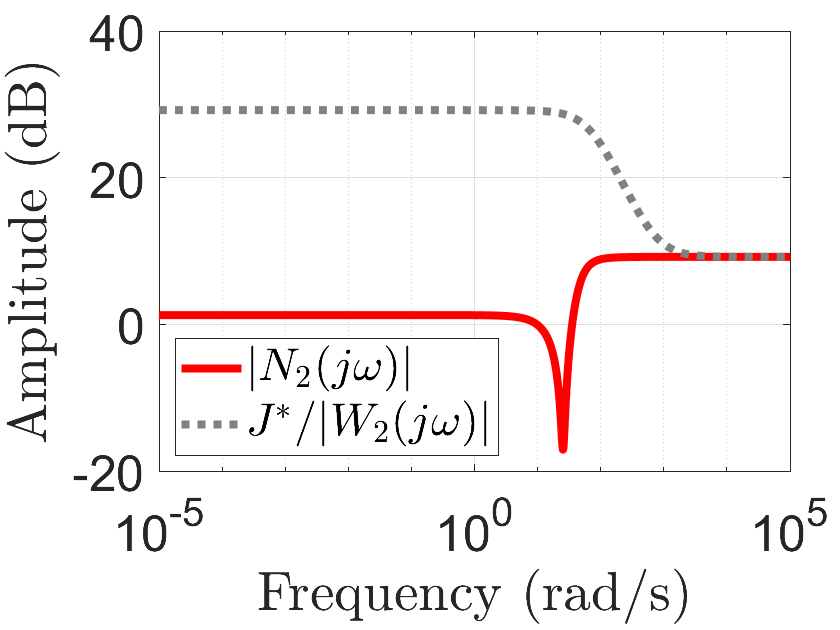}
        \caption{Control effort\label{fig:robust_controller_evaluation_b}}
    \end{subfigure}\\
    \begin{subfigure}[b]{0.35\textwidth}
        \centering
        \includegraphics[width=1\textwidth]{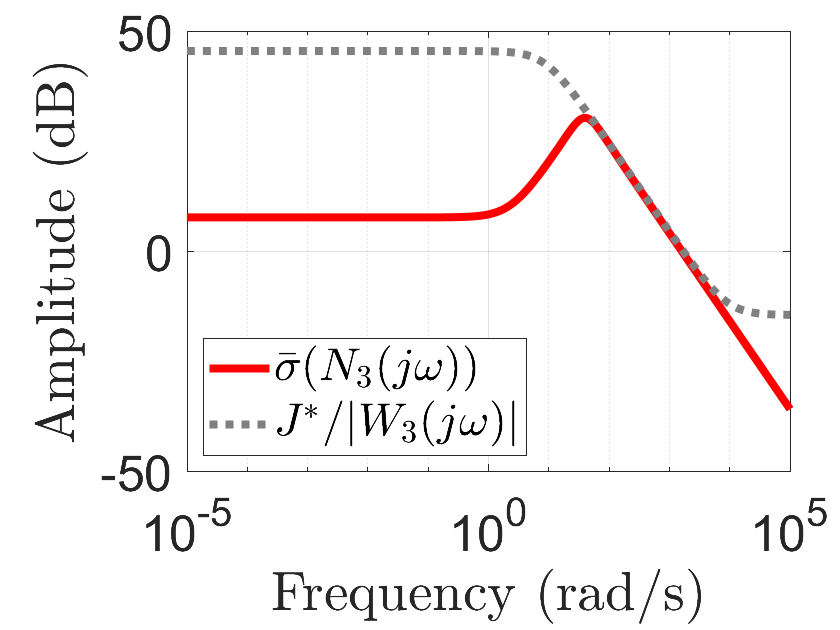}
        \caption{High frequency attenuation\label{fig:robust_controller_evaluation_c}}
    \end{subfigure}
    \begin{subfigure}[b]{0.35\textwidth}
        \centering
        \includegraphics[width=1\textwidth]{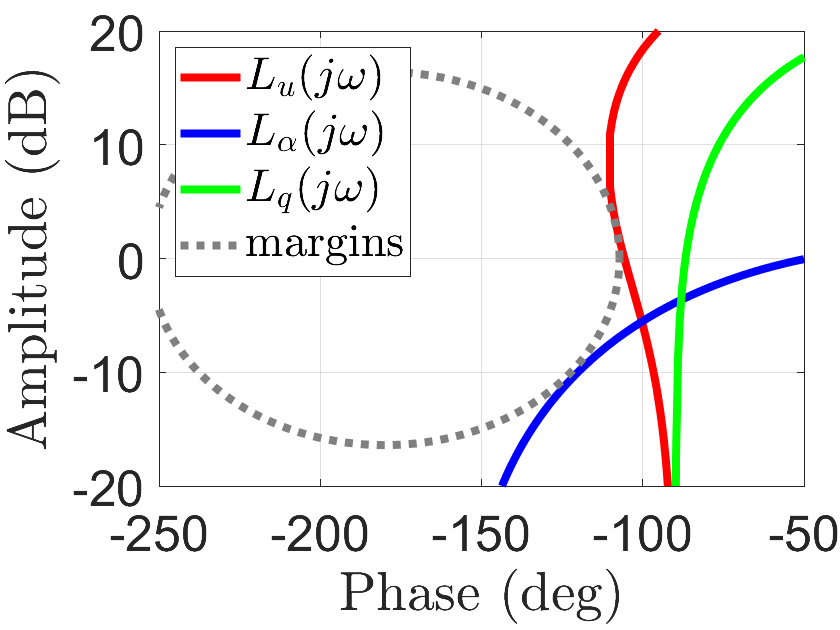}
        \caption{Disk margins\label{fig:robust_controller_evaluation_d}}
    \end{subfigure}
	\caption{Performance evaluation of robust controller.} 
	\label{fig:robust_controller_evaluation}
\end{figure*}

The results of the robust controller synthesis are visualized in Fig. \ref{fig:robust_controller_evaluation}. The desired performance criteria defined by the scaled weightings $J^{*}/|W_i(j\omega)|$ are met by the synthesized closed-loop frequency responses $N_i(j\omega)$. In particular, the achieved performance with respect to tracking error sensitivity, control effort reduction and noise attenuation are presented by Figs. \ref{fig:robust_controller_evaluation}a-c. Further, we consider the input broken-loop frequency response $L_u(j\omega)$ and the output broken-loop frequency responses $L_\alpha(j\omega)$ and $L_q(j\omega)$ to verify the robustness of the synthesized controller.  In Fig. \ref{fig:robust_controller_evaluation}d, the robustness with respect to simultaneous phase and gain variations, known as disk margins, is visualized. For all open-loop responses we achieve strong robustness with a gain margin of at least $\SI{16}{dB}$ and a phase margin of at least $\SI{70}{\deg}$. The obtained control bandwidth is $\SI{21.21}{\radian/\s}$.

Substituting \eqref{eqn:robust_controll_law} into \eqref{eq:LTI_short_period} yields
\begin{equation}
\dot{\mathbf{x}}(t)
=
A_{\mathrm{cl}}(\sigma_0) \mathbf{x}(t)
+
B_{\mathrm{cl}}(\sigma_0) \mathbf{r}(t),
\label{eq:closed_loop_system_filter}
\end{equation}
with
\begin{align}
A_{\mathrm{cl}}(\sigma_0) &= A(\sigma_0) + B(\sigma_0)K_x(\sigma_0),\notag\\
B_{\mathrm{cl}}(\sigma_0) &= B(\sigma_0)K_r(\sigma_0).\notag
\end{align}

This closed-loop model explicitly characterizes the mapping from the commanded reference signal to the system state and forms the foundation for the subsequent development of the reference-level safety filter. In particular, it enables the safety constraints to be formulated with respect to the actual closed-loop dynamics, thereby ensuring that the desired stability and performance properties are preserved, whenever it is possible during the safety-critical operation.

\subsection{Closed-Loop Model-Based Safety Filter}
\label{sec:CL_Safety_Filter}

\begin{figure*}
    \centering
    \includegraphics[width=0.8\textwidth]{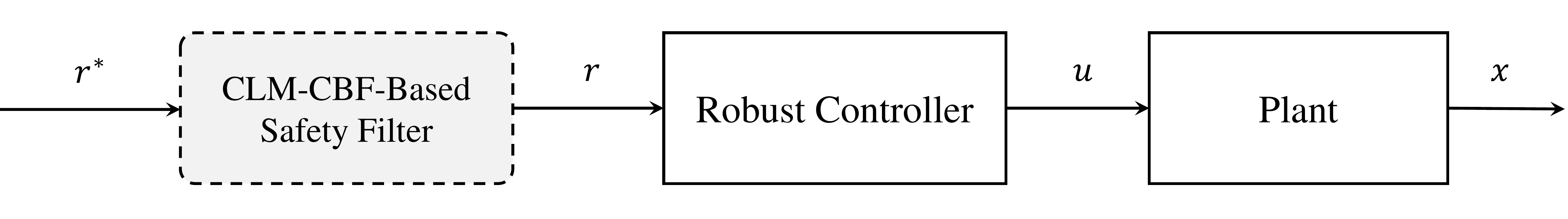}
    \caption{Conceptual control architecture with a closed-loop model-based Control Barrier Function (CLM-CBF).}
    \label{fig:CLM-CBF}
\end{figure*}

Based on the closed-loop controller design from the previous subsection, safety is now enforced directly with respect to the closed-loop system dynamics. As shown in Fig.~\ref{fig:CLM-CBF}, the key idea is to shift the barrier-function-based safety condition from the control input-level to the reference-level by explicitly exploiting the mapping from the commanded reference signal to the resulting closed-loop state evolution. In contrast to classical input-level CBF formulations, the present approach does not modify the control law itself, but instead computes a certified reference signal such that the resulting closed-loop trajectory remains within the admissible safe set.


Let the operational safe set be denoted by
\begin{equation}
\label{eq:cl_safe_set}
\mathcal{S}
=
\Bigl\{
\mathbf{x}\in\mathbb{R}^n
\mid
h(\mathbf{x})\geq 0
\Bigr\}.
\end{equation}


The objective is to guarantee forward invariance of \(\mathcal{S}\) for the closed-loop system in \eqref{eq:closed_loop_system_filter}. The following result establishes the barrier condition for the closed-loop system.

\begin{theorem}
\label{thm:robust_cl_cbf}
Let the safe set \(S\) be defined by \eqref{eq:cl_safe_set}, and consider the closed-loop dynamics \eqref{eq:closed_loop_system_filter}. Assume that there exists an extended class-\(\mathcal{K}\) function \(\alpha(\cdot)\) such that, for all \(x \in S\), the reference input \(r\) satisfies
\begin{equation}
\frac{\partial h(\mathbf{x})}{\partial \mathbf{x}}\Big(A_{\mathrm{cl}}(\sigma_0) \mathbf{x}(t) + B_{\mathrm{cl}}(\sigma_0) \mathbf{r}(t)\Big)
\geq
-\alpha\big(h(\mathbf{x}(t))\big).
\label{eq:cl_cbf_condition}
\end{equation}
Then the set \(S\) is forward invariant under the closed-loop dynamics \eqref{eq:closed_loop_system_filter}.
\end{theorem}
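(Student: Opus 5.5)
The plan is to recast the closed-loop system \eqref{eq:closed_loop_system_filter}, driven by the certified reference, as a control-affine system of the form \eqref{NonlinearPlant1} and then apply Theorem~\ref{theorem_LCBF}. The reference \(\mathbf{r}\) plays the role of the input. The drift is \(\mathbf{f}(\mathbf{x}) = A_{\mathrm{cl}}(\sigma_0)\mathbf{x}\) and the input matrix is \(\mathbf{g}(\mathbf{x}) = B_{\mathrm{cl}}(\sigma_0)\). Both are linear or constant, hence smooth. Condition \eqref{eq:cl_cbf_condition} then reads exactly \(L_{\mathbf{f}}h(\mathbf{x}) + L_{\mathbf{g}}h(\mathbf{x})\mathbf{r} \ge -\alpha(h(\mathbf{x}))\). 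So the reference lies pointwise in the admissible set \(K_S(\mathbf{x})\) of Theorem~\ref{theorem_LCBF}, with \(U\) replaced by the reference space.

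To be self-contained, I would also give the direct comparison argument rather than only cite \cite{XU2015}. Take a solution \(\mathbf{x}(t)\) with \(\mathbf{x}(t_0)\in S\), and set \(\eta(t) := h(\mathbf{x}(t))\). Because \(h\) is continuously differentiable and \(\mathbf{x}(\cdot)\) is absolutely continuous, \(\eta\) is absolutely continuous. While \(\mathbf{x}(t)\in S\), the chain rule and \eqref{eq:cl_cbf_condition} give \(\dot\eta(t) \ge -\alpha(\eta(t))\). The scalar comparison system \(\dot y = -\alpha(y)\) with \(y(t_0)=\eta(t_0)\ge 0\) has \(y\equiv 0\) as an equilibrium, since \(\alpha(0)=0\). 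Solutions starting at \(y\ge 0\) therefore stay nonnegative, and the comparison lemma yields \(\eta(t)\ge y(t)\ge 0\).

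The main obstacle is the circularity in this argument: \eqref{eq:cl_cbf_condition} is only assumed for \(\mathbf{x}\in S\), yet the comparison lemma is being used to show that the trajectory remains in \(S\). I would handle this by contradiction. Suppose \(t^\ast = \sup\{t \ge t_0 : \eta(s)\ge 0 \text{ on } [t_0,t]\}\) is finite. By continuity, \(\eta(t^\ast)=0\), and \(\eta\) is negative at points arbitrarily close after \(t^\ast\).

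Closing this step needs regularity assumptions that I would state explicitly, since they are implicit in Theorem~\ref{theorem_LCBF}. First, the reference \(\mathbf{r}(t)\) should come from a Lipschitz feedback \(\mathbf{r}=\mathbf{k}(\mathbf{x})\), as produced by the reference-level QP, so that solutions exist and are unique. Second, \(\alpha\) should be locally Lipschitz, or else \(\nabla h\neq 0\) on \(\partial S\), so that the Nagumo/Brezis tangency condition applies. On \(\partial S\), \eqref{eq:cl_cbf_condition} gives \(\dot\eta \ge 0\), meaning the closed-loop vector field points into the tangent cone of \(S\). Nagumo's theorem then yields forward invariance and rules out the escape at \(t^\ast\).

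For the missile application, \(h\) is affine in \(\alpha\), so \(\nabla h\) is a nonzero constant and the gradient condition holds trivially.
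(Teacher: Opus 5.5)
Your proof closes exactly as the paper's does: on $\partial S$ the condition gives $\dot h\ge 0$, so the closed-loop field points inward or tangentially, and Nagumo's theorem yields forward invariance. The extra hypotheses you make explicit are exactly what the paper's one-line argument leaves implicit: a Lipschitz feedback $\mathbf{r}=\mathbf{k}(\mathbf{x})$ for uniqueness, and $\nabla h\neq 0$ on $\partial S$, so that $\nabla h\,\dot{\mathbf{x}}\ge 0$ really places the field in the tangent cone.

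One correction: when the inequality is assumed only on $S$, local Lipschitzness of $\alpha$ does not substitute for $\nabla h\neq 0$. For $h(x)=-x^2$, $S=\{0\}$ and $\dot x=1$, the condition holds with $\alpha(s)=s$, yet the trajectory leaves $S$. The circularity you identified is therefore resolved only by the regular-boundary assumption, or by imposing the inequality on an open neighbourhood of $S$.
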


\begin{proof}
The proof follows the standard CBF argument applied to the closed-loop system. In particular, on the boundary of the safe set, $\partial S \triangleq \{\mathbf{x}(t) \in \mathbb{R}^n \mid h(\mathbf{x}(t))=0\}$, condition \eqref{eq:cl_cbf_condition} reduces to $\dot{h}(\mathbf{x}(t)) = \frac{\partial h(\mathbf{x})}{\partial \mathbf{x}}\Big(A_{\mathrm{cl}}(\sigma_0) \mathbf{x}(t) + B_{\mathrm{cl}}(\sigma_0) \mathbf{r}(t)\Big)
\geq 0$. Thus, the closed-loop vector field points inward or tangentially on the boundary of \(S\). By Nagumo's theorem, this is sufficient to guarantee forward invariance of \(S\) \cite{Nagumo_1942}. Hence, every trajectory starting in \(S\) remains in \(S\) for $t\geq0$.
\end{proof}

Theorem~\ref{thm:robust_cl_cbf} shows that the closed-loop model can be directly used for safety certification within the reference-level safety filter. This result formalizes the central idea of the proposed approach: safety is enforced with respect to the closed-loop system dynamics, rather than the open-loop plant. As a consequence, the safety filter does not modify the control law itself, but instead adjusts the commanded reference signal such that the resulting closed-loop trajectory remains within the admissible set.

Let \(\mathbf{r}^\star(t)\) denote the desired, potentially unsafe, reference command. The safety filter computes the closest admissible reference signal \(\mathbf{r}(t)\) that satisfies the closed-loop barrier condition. This leads to the quadratic program

\begin{align}
\mathbf{r}(t)&=\arg\min_{\mathbf{r}(t)\in\mathbb{R}^p}\quad \|\mathbf{r}(t)-\mathbf{r}^\star(t)\|_W^2 \label{eq:cl_reference_filter_qp}\\
&\text{s.t.}\notag\\ 
&\frac{\partial h(\mathbf{x}(t))}{\partial \mathbf{x}}\Big(A_{\mathrm{cl}}(\sigma_0) \mathbf{x}(t)+B_{\mathrm{cl}}(\sigma_0) \mathbf{r}(t)\Big)
\geq -\alpha\!\big(h(\mathbf{x}(t))\big),\notag
\end{align}
with \( \lVert \cdot \rVert_W^2 := (\cdot)^\top W (\cdot) \) denoting a weighted \(L_2\)-norm.

The optimization problem \eqref{eq:cl_reference_filter_qp} computes the least-invasive modification of the commanded reference signal while ensuring forward invariance of the safe set with respect to the closed-loop dynamics. Since the optimization is performed at the reference-level, the feedback controller remains unchanged and retains its original stabilization and tracking properties.

\begin{remark}[Connection to reference governors]
\label{rem:reference_governance}
The proposed CLM-CBF framework is closely related to the concept of reference governors, wherein the commanded reference signal is modified to ensure constraint satisfaction of the closed-loop system. In this sense, the proposed approach can be interpreted as a control barrier function-based realization of a reference governor, in which safety is enforced through inequality constraints derived from barrier functions. 
For comprehensive overviews of reference governor techniques, the reader is referred to \cite{Garone2017,Kolmanovsky2014}.
\end{remark}

\section{Pole-Region Preservation Analysis for CLM-CBF Safety Filters}

The preceding subsection introduced the concept of CLM-CBF safety filters as a reference-level supervisory mechanism for enforcing forward invariance of the safe set. In contrast to input-level CBF-QP formulations, the proposed safety filter does not directly modify the control input generated by the feedback law in \eqref{eqn:robust_controll_law}. 
This architectural separation is expected to be beneficial from a certification perspective, as the safety-critical supervision is introduced at the reference level while preserving the structure of the underlying flight control law.

However, similar to the discussion in Section~\ref{sec:Motivating Example} for OLM-CBF concepts, preservation of the controller realization and safety enforcement at the reference level does not per se imply unconditional preservation of the closed-loop pole locations during active safety intervention. The reason is that, despite the higher-level intervention mechanism, the certified reference signal computed by the CLM-CBF QP is, in general, a state-dependent optimizer map. Let this map be denoted by
\begin{equation}
\mathbf{r}(t)
=
\pi\!\left(
\mathbf{x}(t),
\mathbf{r}^{\star}(t)
\right),
\label{eq:clm_optimizer_map}
\end{equation}
where \(\mathbf{r}^{\star}(t)\) denotes the desired reference command and \(\pi(\cdot,\cdot)\) is induced by the optimization problem \eqref{eq:cl_reference_filter_qp}.

\begin{remark}
To improve readability, the dependence on the operating point \(\sigma_0\) is suppressed throughout this subsection and restored in the subsequent parts of the manuscript. Thus, we write \(A_{\mathrm{cl}}=A_{\mathrm{cl}}(\sigma_0)\) and \(B_{\mathrm{cl}}=B_{\mathrm{cl}}(\sigma_0)\), with the dependence on \(\sigma_0\) understood from context.
\end{remark}

Substitution of \eqref{eq:clm_optimizer_map} into the closed-loop model \eqref{eq:closed_loop_system_filter} gives the exact safety-filtered closed-loop vector field
\begin{equation}
\dot{\mathbf{x}}(t)
=
\mathbf{f}_{\mathrm{sf}}
\!\left(
\mathbf{x}(t),
\mathbf{r}^{\star}(t)
\right)
=
A_{\mathrm{cl}}\mathbf{x}(t)
+
B_{\mathrm{cl}}
\pi\!\left(
\mathbf{x}(t),
\mathbf{r}^{\star}(t)
\right).
\label{eq:clm_cbf_closed_loop_map}
\end{equation}
Thus, when the safety filter is active, the CLM-CBF introduces an additional supervisory feedback path through the reference channel. Consequently, the complete safety-filtered system must be distinguished from the nominal closed-loop system driven by an exogenous reference command.

To make this effect explicit, consider an operating point \((\mathbf{x}_e,\mathbf{r}^{\star}_e)\) satisfying
\begin{equation}
\mathbf{f}_{\mathrm{sf}}
\!\left(
\mathbf{x}_e,
\mathbf{r}^{\star}_e
\right)
=
0.
\label{eq:safety_filtered_equilibrium}
\end{equation}
Define the local perturbation variables
\begin{equation}
\delta\mathbf{x}
=
\mathbf{x}-\mathbf{x}_e,
\qquad
\delta\mathbf{r}^{\star}
=
\mathbf{r}^{\star}-\mathbf{r}^{\star}_e .
\label{eq:local_perturbation_variables}
\end{equation}
Assuming that the optimizer map is differentiable at \((\mathbf{x}_e,\mathbf{r}^{\star}_e)\), the first-order variation of the certified reference is
\begin{equation}
\delta \mathbf{r}
=
\left.
\frac{\partial \pi}{\partial \mathbf{x}}
\right|_{(\mathbf{x}_e,\mathbf{r}^{\star}_e)}
\delta \mathbf{x}
+
\left.
\frac{\partial \pi}{\partial \mathbf{r}^{\star}}
\right|_{(\mathbf{x}_e,\mathbf{r}^{\star}_e)}
\delta \mathbf{r}^{\star}
+
\mathcal{O}
\!\left(
\left\|
\begin{bmatrix}
\delta\mathbf{x}\\
\delta\mathbf{r}^{\star}
\end{bmatrix}
\right\|^2
\right).
\label{eq:optimizer_map_expansion}
\end{equation}
The corresponding linearized safety-filtered dynamics are
\begin{equation}
\delta\dot{\mathbf{x}}
=
A_{\mathrm{eff}}\delta\mathbf{x}
+
B_{\mathrm{eff}}\delta\mathbf{r}^{\star},
\label{eq:linearized_safety_filtered_dynamics}
\end{equation}
where
\begin{equation}
A_{\mathrm{eff}}
=
A_{\mathrm{cl}}
+
\Delta A_{\pi},
\label{eq:effective_closed_loop_matrix_clm}
\end{equation}
with
\begin{equation}
\Delta A_{\pi}
=
B_{\mathrm{cl}}
\left.
\frac{\partial \pi}{\partial \mathbf{x}}
\right|_{(\mathbf{x}_e,\mathbf{r}^{\star}_e)},
\label{eq:optimizer_induced_perturbation}
\end{equation}
and
\begin{equation}
B_{\mathrm{eff}}
=
B_{\mathrm{cl}}
\left.
\frac{\partial \pi}{\partial \mathbf{r}^{\star}}
\right|_{(\mathbf{x}_e,\mathbf{r}^{\star}_e)} .
\label{eq:effective_reference_matrix}
\end{equation}
Equation~\eqref{eq:effective_closed_loop_matrix_clm} shows that active CLM-CBF intervention may shift the local closed-loop poles through the optimizer sensitivity \(\partial \pi/\partial \mathbf{x}\). Exact pole preservation during active safety filtering would require
\begin{equation}
B_{\mathrm{cl}}
\frac{\partial \pi}{\partial \mathbf{x}}
=
0,
\label{eq:exact_pole_preservation_condition}
\end{equation}
which is generally not satisfied for a state-dependent safety filter. Therefore, the appropriate objective is not exact pole preservation, but preservation of a prescribed pole region associated with the required robustness margins, e.g., those used during the robust controller synthesis in Section~\ref{sec:CL_Controller_Design}.

It is important to emphasize that the pole-region analysis is inherently a local linear concept. The exact safety-filtered dynamics \eqref{eq:clm_cbf_closed_loop_map} may remain nonlinear due to the state dependence of the optimizer map. Therefore, the nonlinear terms not captured by the first-order approximation are explicitly retained in the following. Define the nonlinear residual
\begin{align}
\mathbf{d}
\!\left(
\delta\mathbf{x},
\delta\mathbf{r}^{\star}
\right)
&=
\mathbf{f}_{\mathrm{sf}}
\!\left(
\mathbf{x}_e+\delta\mathbf{x},
\mathbf{r}^{\star}_e+\delta\mathbf{r}^{\star}
\right)
\nonumber\\
&\quad
-
A_{\mathrm{eff}}\delta\mathbf{x}
-
B_{\mathrm{eff}}\delta\mathbf{r}^{\star}.
\label{eq:nonlinear_residual_definition}
\end{align}
Since \((\mathbf{x}_e,\mathbf{r}^{\star}_e)\) satisfies \eqref{eq:safety_filtered_equilibrium}, it follows that
\begin{equation}
\mathbf{d}(\mathbf{0},\mathbf{0})=\mathbf{0}.
\end{equation}
The exact nonlinear safety-filtered dynamics can therefore be written as
\begin{equation}
\delta\dot{\mathbf{x}}
=
A_{\mathrm{eff}}\delta\mathbf{x}
+
B_{\mathrm{eff}}\delta\mathbf{r}^{\star}
+
\mathbf{d}
\!\left(
\delta\mathbf{x},
\delta\mathbf{r}^{\star}
\right).
\label{eq:exact_local_safety_filtered_dynamics}
\end{equation}
Thus, the first-order approximation is not used to discard the nonlinearities. Rather, it separates the local pole-forming dynamics from a higher-order residual term, which can be bounded on a neighborhood of the operating point.

Let the desired pole region be defined as the shifted open left half-plane
\begin{equation}
\mathcal{D}_{\sigma}
=
\Bigl\{
\lambda\in\mathbb{C}
:
\operatorname{Re}(\lambda)<-\sigma
\Bigr\},
\qquad
\sigma>0,
\label{eq:desired_pole_region}
\end{equation}
where \(\sigma\) specifies the required minimum exponential decay rate. The following analysis assumes that the nominal closed-loop matrix obtained from the robust controller synthesis admits a shifted Lyapunov certificate for this prescribed decay rate. That is, there exist matrices \(P=P^T>0\) and \(Q=Q^T>0\) such that
\begin{equation}
A_{\mathrm{cl}}^T P
+
P A_{\mathrm{cl}}
+
2\sigma P
\leq
-Q.
\label{eq:nominal_pole_region_certificate}
\end{equation}
This condition is equivalent to
\begin{equation}
(A_{\mathrm{cl}}+\sigma I)^T P
+
P(A_{\mathrm{cl}}+\sigma I)
\leq
-Q,
\end{equation}
and is feasible whenever \(A_{\mathrm{cl}}+\sigma I\) is Hurwitz. Thus, \eqref{eq:nominal_pole_region_certificate} is not imposed for an arbitrary \(\sigma\), but is a verifiable certificate that the synthesized nominal closed-loop system has a decay margin larger than \(\sigma\). In practice, \(\sigma\) is selected below the nominal closed-loop decay rate achieved by the robust controller, and the matrices \(P\) and \(Q\) are obtained by solving the corresponding Lyapunov inequality. The matrix \(Q\), together with \(P\), then defines a computable Lyapunov decay reserve that can be used to bound admissible perturbations of the closed-loop system matrix.

The effect of the CLM-CBF safety filter can now be bounded through the sensitivity of the optimizer map. Let \(\Omega\subset\mathbb{R}^n\) denote the operating region in which the safety filter is expected to operate, and define
\begin{equation}
L_{\pi}
=
\sup_{\mathbf{x}\in\Omega}
\left\|
\frac{\partial \pi}{\partial \mathbf{x}}
\right\|.
\label{eq:optimizer_sensitivity_bound}
\end{equation}
The quantity \(L_{\pi}\) measures the maximum local gain from state variations to reference modifications induced by the CLM-CBF optimizer. It therefore characterizes the aggressiveness of the supervisory safety layer.

The following theorem gives a sufficient condition under which the linearized active CLM-CBF safety-filtered dynamics preserve the prescribed pole region.

\begin{theorem}[Linearized pole-region preservation under CLM-CBF safety filtering]
\label{thm:linearized_pole_region_preservation}
Consider the safety-filtered closed-loop system \eqref{eq:clm_cbf_closed_loop_map} and its linearization \eqref{eq:linearized_safety_filtered_dynamics} around \((\mathbf{x}_e,\mathbf{r}^{\star}_e)\). Suppose that the nominal closed-loop matrix satisfies \eqref{eq:nominal_pole_region_certificate} for some \(P=P^T>0\), \(Q=Q^T>0\), and \(\sigma>0\). Assume further that the optimizer map \(\pi\) is locally differentiable in the operating region \(\Omega\) and satisfies \eqref{eq:optimizer_sensitivity_bound}. If
\begin{equation}
\left\|
B_{\mathrm{cl}}
\right\|
L_{\pi}
<
\frac{\lambda_{\min}(Q)}
{2\left\|P\right\|},
\label{eq:bounded_pole_region_condition}
\end{equation}
then all eigenvalues of the local active safety-filtered matrix \(A_{\mathrm{eff}}\) remain in the prescribed pole region \(\mathcal{D}_{\sigma}\).
\end{theorem}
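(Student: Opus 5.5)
The plan is a Lyapunov perturbation argument. The nominal certificate \eqref{eq:nominal_pole_region_certificate} is reused for the perturbed matrix $A_{\mathrm{eff}} = A_{\mathrm{cl}} + \Delta A_{\pi}$ from \eqref{eq:effective_closed_loop_matrix_clm}. The aim is to show that the same $P$ still certifies a decay rate strictly larger than $\sigma$, i.e.
\begin{equation*}
(A_{\mathrm{eff}}+\sigma I)^T P + P(A_{\mathrm{eff}}+\sigma I) < 0 .
\end{equation*}
This is the standard shifted Lyapunov characterization: if this strict inequality holds for some $P=P^T>0$, then $A_{\mathrm{eff}}+\sigma I$ is Hurwitz. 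In that case every eigenvalue of $A_{\mathrm{eff}}$ has real part less than $-\sigma$, that is, lies in $\mathcal{D}_{\sigma}$.

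The first step expands the shifted Lyapunov expression for $A_{\mathrm{eff}}$ and applies \eqref{eq:nominal_pole_region_certificate}:
\begin{equation*}
A_{\mathrm{eff}}^T P + P A_{\mathrm{eff}} + 2\sigma P \leq -Q + \Delta A_{\pi}^T P + P\Delta A_{\pi}.
\end{equation*}

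The second step bounds the perturbation term in the quadratic form. For any $v \neq 0$,
\begin{equation*}
v^T(\Delta A_{\pi}^T P + P\Delta A_{\pi})v = 2 v^T P \Delta A_{\pi} v \leq 2\|P\|\,\|\Delta A_{\pi}\|\,\|v\|^2 .
\end{equation*}
Submultiplicativity together with \eqref{eq:optimizer_induced_perturbation} and \eqref{eq:optimizer_sensitivity_bound} gives
\begin{equation*}
\|\Delta A_{\pi}\| \leq \|B_{\mathrm{cl}}\|\,\|\partial\pi/\partial\mathbf{x}\| \leq \|B_{\mathrm{cl}}\| L_{\pi}.
\end{equation*}
This requires that $\mathbf{x}_e \in \Omega$ and that $\pi$ be differentiable there, both of which are assumed. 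Combining with $v^TQv \geq \lambda_{\min}(Q)\|v\|^2$ gives
\begin{equation*}
v^T\big((A_{\mathrm{eff}}+\sigma I)^T P + P(A_{\mathrm{eff}}+\sigma I)\big)v \leq -\big(\lambda_{\min}(Q) - 2\|P\|\,\|B_{\mathrm{cl}}\|L_{\pi}\big)\|v\|^2 .
\end{equation*}
Condition \eqref{eq:bounded_pole_region_condition} makes the bracket strictly positive, so the inequality is strict.

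The third step converts the strict inequality into the eigenvalue statement directly, to avoid quoting the Lyapunov theorem loosely. Take an eigenpair $A_{\mathrm{eff}}w = \lambda w$ with $w \in \mathbb{C}^n\setminus\{0\}$, and evaluate the Hermitian form $w^*(\cdot)w$. Since $P$ is Hermitian, $w^*(A_{\mathrm{eff}}^T P + P A_{\mathrm{eff}})w = (\bar\lambda + \lambda)\, w^*Pw$, so
\begin{equation*}
2(\operatorname{Re}\lambda + \sigma)\, w^*Pw < 0 .
\end{equation*}
Because $w^*Pw > 0$, this yields $\operatorname{Re}\lambda < -\sigma$. The bound in step two carries over to complex $w$ because all the matrices involved are real symmetric or real.

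The main point of care is the second step. The induced $2$-norm must be used consistently, and the complex-vector version of the quadratic-form bound must be handled in step three. The theorem's claim is only about the linearization, so the residual $\mathbf{d}$ in \eqref{eq:exact_local_safety_filtered_dynamics} plays no role here. The only remaining subtlety is that at points where the QP active set switches, $\partial\pi/\partial\mathbf{x}$ is the one-sided derivative of the active branch. That case is covered by the assumed local differentiability at the operating point.
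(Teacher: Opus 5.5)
Your proposal is correct and follows essentially the same route as the paper: you reuse the nominal shifted Lyapunov certificate with the same $P$, bound the perturbation term $\Delta A_{\pi}^T P + P\Delta A_{\pi}$ by $2\|P\|\,\|B_{\mathrm{cl}}\|L_{\pi}$, and absorb it into $\lambda_{\min}(Q)$ via \eqref{eq:bounded_pole_region_condition}. The only difference is that your eigenvector computation spells out the step from $A_{\mathrm{eff}}^TP+PA_{\mathrm{eff}}+2\sigma P<0$ to $\operatorname{Re}\lambda<-\sigma$, which the paper simply asserts.
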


\begin{proof}
The local active safety-filtered linearization is governed by
\begin{equation}
A_{\mathrm{eff}}
=
A_{\mathrm{cl}}
+
\Delta A_{\pi}.
\end{equation}
To guarantee that all eigenvalues of \(A_{\mathrm{eff}}\) remain in \(\mathcal{D}_{\sigma}\), it is sufficient to show that
\begin{equation}
A_{\mathrm{eff}}^T P
+
P A_{\mathrm{eff}}
+
2\sigma P
<
0.
\label{eq:perturbed_pole_region_certificate}
\end{equation}
Substituting \(A_{\mathrm{eff}}=A_{\mathrm{cl}}+\Delta A_{\pi}\) into \eqref{eq:perturbed_pole_region_certificate} gives
\begin{align}
&
A_{\mathrm{eff}}^T P
+
P A_{\mathrm{eff}}
+
2\sigma P
\nonumber\\
&=
A_{\mathrm{cl}}^T P
+
P A_{\mathrm{cl}}
+
2\sigma P
+
\Delta A_{\pi}^T P
+
P\Delta A_{\pi}.
\end{align}
Using the nominal certificate \eqref{eq:nominal_pole_region_certificate}, it follows that
\begin{align}
&
A_{\mathrm{eff}}^T P
+
P A_{\mathrm{eff}}
+
2\sigma P
\nonumber\\
&\leq
-Q
+
\Delta A_{\pi}^T P
+
P\Delta A_{\pi}.
\end{align}
Hence, \eqref{eq:perturbed_pole_region_certificate} is guaranteed if
\begin{equation}
\Delta A_{\pi}^T P
+
P\Delta A_{\pi}
<
Q.
\label{eq:perturbation_margin_condition}
\end{equation}
Furthermore,
\begin{equation}
\left\|
\Delta A_{\pi}^T P
+
P\Delta A_{\pi}
\right\|
\leq
2
\left\|P\right\|
\left\|
\Delta A_{\pi}
\right\|.
\end{equation}
From \eqref{eq:optimizer_induced_perturbation} and \eqref{eq:optimizer_sensitivity_bound},
\begin{equation}
\left\|
\Delta A_{\pi}
\right\|
\leq
\left\|
B_{\mathrm{cl}}
\right\|
L_{\pi}.
\end{equation}
Therefore, condition \eqref{eq:bounded_pole_region_condition} implies
\begin{equation}
2
\left\|P\right\|
\left\|
\Delta A_{\pi}
\right\|
<
\lambda_{\min}(Q),
\end{equation}
which is sufficient for \eqref{eq:perturbation_margin_condition}. Consequently, \eqref{eq:perturbed_pole_region_certificate} holds, and all eigenvalues of \(A_{\mathrm{eff}}\) lie in \(\mathcal{D}_{\sigma}\).
\end{proof}

\begin{remark}[Dependence on the prescribed decay rate]
Although the sufficient condition in \eqref{eq:bounded_pole_region_condition} does not contain \(\sigma\) explicitly, its influence enters through the shifted Lyapunov certificate \eqref{eq:nominal_pole_region_certificate}. For each prescribed decay rate \(\sigma\), the matrices \(P\) and \(Q\) must certify the shifted matrix \(A_{\mathrm{cl}}+\sigma I\). Hence, the admissible optimizer sensitivity is implicitly determined by the selected pole-region requirement.
\end{remark}

Theorem~\ref{thm:linearized_pole_region_preservation} provides a quantitative robustness condition for the linearized active CLM-CBF safety filter. It shows that the active safety filter does not need to preserve the nominal poles exactly. Instead, it is sufficient that the optimizer-induced first-order perturbation remains smaller than the available robustness reserve encoded by the Lyapunov certificate \eqref{eq:nominal_pole_region_certificate}. The corresponding admissible optimizer sensitivity bound is given by \eqref{eq:bounded_pole_region_condition}.

The preceding result certifies the pole region of the linearized active safety-filtered dynamics. To account for the nonlinear terms in \eqref{eq:exact_local_safety_filtered_dynamics}, we next impose a bound on the higher-order residual. Suppose that, in a neighborhood
\begin{equation}
\mathcal{B}_{\rho}
=
\Bigl\{
(\delta\mathbf{x},\delta\mathbf{r}^{\star})
:
\left\|
\begin{bmatrix}
\delta\mathbf{x}\\
\delta\mathbf{r}^{\star}
\end{bmatrix}
\right\|
\leq
\rho
\Bigr\},
\label{eq:local_validity_neighborhood}
\end{equation}
there exists a constant \(L_d>0\) such that
\begin{equation}
\left\|
\mathbf{d}
\!\left(
\delta\mathbf{x},
\delta\mathbf{r}^{\star}
\right)
\right\|
\leq
L_d
\left\|
\begin{bmatrix}
\delta\mathbf{x}\\
\delta\mathbf{r}^{\star}
\end{bmatrix}
\right\|^2
\label{eq:quadratic_residual_bound}
\end{equation}
for all \((\delta\mathbf{x},\delta\mathbf{r}^{\star})\in\mathcal{B}_{\rho}\). Such a bound follows, for example, if \(\mathbf{f}_{\mathrm{sf}}\) is continuously differentiable and its Jacobian is locally Lipschitz in the considered active-set region. The quadratic form of \eqref{eq:quadratic_residual_bound} reflects that the first-order Taylor terms have been explicitly removed in \eqref{eq:nonlinear_residual_definition}.

The following theorem shows that the linearized pole-region certificate also induces a local nonlinear validity region for the exact safety-filtered dynamics.

\begin{theorem}[Local nonlinear validity of the pole-region certificate]
\label{thm:local_nonlinear_validity}
Consider the exact local safety-filtered dynamics \eqref{eq:exact_local_safety_filtered_dynamics}. Suppose that all conditions of Theorem~\ref{thm:linearized_pole_region_preservation} hold, so that the linearized active dynamics satisfy
\begin{equation}
A_{\mathrm{eff}}^T P
+
P A_{\mathrm{eff}}
+
2\sigma P
\leq
-Q_{\mathrm{eff}},
\label{eq:Aeff_shifted_certificate}
\end{equation}
for some \(P=P^T>0\), \(\sigma>0\), and
\begin{equation}
Q_{\mathrm{eff}}
=
Q
-
2\|P\|\|B_{\mathrm{cl}}\|L_\pi I.
\label{eq:Qeff_definition}
\end{equation}
The strict sensitivity condition \eqref{eq:bounded_pole_region_condition} implies \(Q_{\mathrm{eff}}=Q_{\mathrm{eff}}^T>0\). Further suppose that the nonlinear residual satisfies
\begin{equation}
\left\|
\mathbf{d}
\!\left(
\delta\mathbf{x},
\mathbf{0}
\right)
\right\|
\leq
L_d
\left\|
\delta\mathbf{x}
\right\|^2
\label{eq:state_only_residual_bound}
\end{equation}
for all \(\delta\mathbf{x}\) satisfying
\begin{equation}
(\delta\mathbf{x},\mathbf{0})\in\mathcal{B}_{\rho},
\qquad
\text{i.e.}
\qquad
\left\|
\delta\mathbf{x}
\right\|
\leq
\rho .
\end{equation}
Then the equilibrium of the exact nonlinear safety-filtered dynamics is locally exponentially stable for constant reference commands. More precisely, consider the Lyapunov function
\begin{equation}
V(\delta\mathbf{x})
=
\frac{1}{2}
\delta\mathbf{x}^TP\delta\mathbf{x}.
\label{eq:local_nonlinear_lyapunov_function}
\end{equation}
For any radius \(\rho_x>0\) satisfying
\begin{equation}
\rho_x
<
\frac{
\frac{1}{2}\lambda_{\min}(Q_{\mathrm{eff}})
+
\sigma\lambda_{\min}(P)
}
{
\left\|P\right\|L_d
},
\label{eq:local_validity_radius}
\end{equation}
the derivative of \eqref{eq:local_nonlinear_lyapunov_function} along the exact nonlinear safety-filtered dynamics is strictly negative for all
\begin{equation}
0
<
\left\|
\delta\mathbf{x}
\right\|
\leq
\min\{\rho,\rho_x\}.
\end{equation}
\end{theorem}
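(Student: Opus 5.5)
The approach is a direct quadratic Lyapunov argument on the exact local dynamics \eqref{eq:exact_local_safety_filtered_dynamics} with \(\delta\mathbf{r}^{\star}=\mathbf{0}\). The linear part contributes a decay reserve certified by \eqref{eq:Aeff_shifted_certificate}. The residual \(\mathbf{d}\) is bounded via \eqref{eq:state_only_residual_bound}, and a comparison of the quadratic-in-\(\|\delta\mathbf{x}\|\) negative terms against the cubic residual term yields the radius \eqref{eq:local_validity_radius}.

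\emph{Step 1: the certificate for \(A_{\mathrm{eff}}\).} First I justify \eqref{eq:Aeff_shifted_certificate} with \(Q_{\mathrm{eff}}\) as in \eqref{eq:Qeff_definition}. The proof of Theorem~\ref{thm:linearized_pole_region_preservation} gives
\[
A_{\mathrm{eff}}^TP+PA_{\mathrm{eff}}+2\sigma P\leq -Q+\Delta A_\pi^TP+P\Delta A_\pi .
\]
The symmetric matrix \(\Delta A_\pi^TP+P\Delta A_\pi\) has spectral norm at most \(2\|P\|\|B_{\mathrm{cl}}\|L_\pi\). Hence it is bounded above by \(2\|P\|\|B_{\mathrm{cl}}\|L_\pi I\), which gives \eqref{eq:Aeff_shifted_certificate}. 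Positive definiteness of \(Q_{\mathrm{eff}}\) follows from \(\lambda_{\min}(Q_{\mathrm{eff}})=\lambda_{\min}(Q)-2\|P\|\|B_{\mathrm{cl}}\|L_\pi>0\), which is exactly \eqref{eq:bounded_pole_region_condition}.

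\emph{Step 2: differentiate \(V\).} Along the exact dynamics with constant reference,
\[
\dot V=\tfrac12\delta\mathbf{x}^T\bigl(A_{\mathrm{eff}}^TP+PA_{\mathrm{eff}}\bigr)\delta\mathbf{x}+\delta\mathbf{x}^TP\,\mathbf{d}(\delta\mathbf{x},\mathbf{0}).
\]
Using Step 1,
\[
\dot V\leq-\tfrac12\delta\mathbf{x}^TQ_{\mathrm{eff}}\delta\mathbf{x}-\sigma\,\delta\mathbf{x}^TP\delta\mathbf{x}+\|\delta\mathbf{x}\|\,\|P\|\,\|\mathbf{d}\|.
\]
Applying Rayleigh bounds and \eqref{eq:state_only_residual_bound}, valid for \(\|\delta\mathbf{x}\|\leq\rho\), this becomes
\[
\dot V\leq-\Bigl(\tfrac12\lambda_{\min}(Q_{\mathrm{eff}})+\sigma\lambda_{\min}(P)-\|P\|L_d\|\delta\mathbf{x}\|\Bigr)\|\delta\mathbf{x}\|^2 .
\]

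\emph{Step 3: conclude on the ball.} For \(0<\|\delta\mathbf{x}\|\leq\min\{\rho,\rho_x\}\) with \(\rho_x\) as in \eqref{eq:local_validity_radius}, the bracket is bounded below by \(c:=\tfrac12\lambda_{\min}(Q_{\mathrm{eff}})+\sigma\lambda_{\min}(P)-\|P\|L_d\rho_x>0\). Hence \(\dot V\leq -c\|\delta\mathbf{x}\|^2<0\). For local exponential stability, I take a sublevel set \(\{V\leq \tfrac12\lambda_{\min}(P)r^2\}\) with \(r=\min\{\rho,\rho_x\}\). This set lies inside the ball, so it is forward invariant because \(\dot V<0\) on it. On it, \(\dot V\leq-(2c/\|P\|)V\), and the comparison lemma gives exponential decay of \(V\) and hence of \(\|\delta\mathbf{x}\|\), using \(\lambda_{\min}(P)\|\delta\mathbf{x}\|^2\leq 2V\leq\|P\|\|\delta\mathbf{x}\|^2\).

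\emph{Main obstacle.} The main difficulty is Step 1. The statement asserts \eqref{eq:Aeff_shifted_certificate} with the specific \(Q_{\mathrm{eff}}\), so I must be careful that the operator-norm bound on the symmetric perturbation converts into a matrix inequality, not merely a norm inequality. I would state this explicitly via \(M\leq\|M\|I\) for symmetric \(M\). A secondary point is that the residual bound only holds within \(\mathcal{B}_\rho\), which is why the invariance argument uses a sublevel set contained in the ball of radius \(\min\{\rho,\rho_x\}\).
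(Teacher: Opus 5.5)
Your proposal is correct and follows essentially the same route as the paper: the same quadratic Lyapunov function $V=\tfrac12\delta\mathbf{x}^TP\delta\mathbf{x}$, the same use of the shifted certificate for $A_{\mathrm{eff}}$ with Rayleigh bounds, and the same comparison of the quadratic decay term against the cubic residual term $\|P\|L_d\|\delta\mathbf{x}\|^3$ to obtain the radius $\rho_x$. Your Step 1 (deriving the $Q_{\mathrm{eff}}$ certificate via $M\le\|M\|I$, which the paper simply takes as part of the hypotheses) and your Step 3 (a uniform constant $c>0$, an invariant sublevel set inside the ball, and the comparison lemma) are small additions that spell out the passage from $\dot V<0$ to exponential stability, which the paper asserts without proof.
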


\begin{proof}
For constant reference commands, \(\delta\mathbf{r}^{\star}=\mathbf{0}\), and the exact local dynamics reduce to
\begin{equation}
\delta\dot{\mathbf{x}}
=
A_{\mathrm{eff}}\delta\mathbf{x}
+
\mathbf{d}
\!\left(
\delta\mathbf{x},
\mathbf{0}
\right).
\end{equation}
Using the Lyapunov function \eqref{eq:local_nonlinear_lyapunov_function}, its derivative along the exact local dynamics satisfies
\begin{align}
\dot V
&=
\delta\mathbf{x}^TP
A_{\mathrm{eff}}\delta\mathbf{x}
+
\delta\mathbf{x}^TP
\mathbf{d}
\!\left(
\delta\mathbf{x},
\mathbf{0}
\right)
\nonumber\\
&=
\frac{1}{2}
\delta\mathbf{x}^T
\left(
A_{\mathrm{eff}}^TP
+
P A_{\mathrm{eff}}
\right)
\delta\mathbf{x}
+
\delta\mathbf{x}^TP
\mathbf{d}
\!\left(
\delta\mathbf{x},
\mathbf{0}
\right).
\end{align}
Using \eqref{eq:Aeff_shifted_certificate}, we obtain
\begin{equation}
A_{\mathrm{eff}}^TP
+
P A_{\mathrm{eff}}
\leq
-Q_{\mathrm{eff}}
-
2\sigma P.
\end{equation}
Hence,
\begin{align}
\dot V
&\leq
-
\frac{1}{2}
\delta\mathbf{x}^TQ_{\mathrm{eff}}\delta\mathbf{x}
-
\sigma
\delta\mathbf{x}^TP\delta\mathbf{x}
+
\left\|
\delta\mathbf{x}
\right\|
\left\|P\right\|
\left\|
\mathbf{d}
\!\left(
\delta\mathbf{x},
\mathbf{0}
\right)
\right\|.
\label{eq:Vdot_before_residual_bound}
\end{align}
For all \(\delta\mathbf{x}\) satisfying
\begin{equation}
(\delta\mathbf{x},\mathbf{0})\in\mathcal{B}_{\rho},
\end{equation}
the residual bound \eqref{eq:state_only_residual_bound} holds. Therefore,
\begin{align}
\dot V
&\leq
-
\left(
\frac{1}{2}\lambda_{\min}(Q_{\mathrm{eff}})
+
\sigma\lambda_{\min}(P)
\right)
\left\|
\delta\mathbf{x}
\right\|^2
+
\left\|P\right\|L_d
\left\|
\delta\mathbf{x}
\right\|^3
\nonumber\\
&=
-
\left[
\frac{1}{2}\lambda_{\min}(Q_{\mathrm{eff}})
+
\sigma\lambda_{\min}(P)
-
\left\|P\right\|L_d
\left\|
\delta\mathbf{x}
\right\|
\right]
\left\|
\delta\mathbf{x}
\right\|^2.
\label{eq:Vdot_final_bound}
\end{align}
If
\begin{equation}
0
<
\left\|
\delta\mathbf{x}
\right\|
\leq
\min\{\rho,\rho_x\},
\end{equation}
then the residual bound is valid and, by \eqref{eq:local_validity_radius}, the term in brackets in \eqref{eq:Vdot_final_bound} is strictly positive. Hence,
\begin{equation}
\dot V<0
\end{equation}
for all nonzero \(\delta\mathbf{x}\) in this neighborhood. This proves local exponential stability of the exact nonlinear safety-filtered dynamics for constant reference commands.
\end{proof}


\begin{remark}[Effect of reference variations]
Theorem~\ref{thm:local_nonlinear_validity} addresses the autonomous local behavior for constant references. If \(\delta\mathbf{r}^{\star}\neq \mathbf{0}\), then the exact dynamics \eqref{eq:exact_local_safety_filtered_dynamics} contain the additional input term \(B_{\mathrm{eff}}\delta\mathbf{r}^{\star}\). In that case, the same Lyapunov argument yields a local input-to-state stability type bound with respect to reference variations. In particular,
\begin{align}
\dot V
&\leq
-\alpha_x
\left\|
\delta\mathbf{x}
\right\|^2
+
\left\|P\right\|
\left\|
B_{\mathrm{eff}}
\right\|
\left\|
\delta\mathbf{x}
\right\|
\left\|
\delta\mathbf{r}^{\star}
\right\|
\nonumber\\
&\quad
+
\left\|P\right\|L_d
\left\|
\delta\mathbf{x}
\right\|
\left\|
\begin{bmatrix}
\delta\mathbf{x}\\
\delta\mathbf{r}^{\star}
\end{bmatrix}
\right\|^2,
\label{eq:local_iss_bound_reference}
\end{align}
where
\begin{equation}
\alpha_x
=
\frac{1}{2}\lambda_{\min}(Q_{\mathrm{eff}})
+
\sigma\lambda_{\min}(P).
\end{equation}
Thus, sufficiently small reference variations lead to bounded local deviations of the exact nonlinear safety-filtered dynamics. The pole-region certificate should therefore be interpreted as a local robustness certificate around the considered operating point, with the certified neighborhood determined by the Lyapunov decay margin and the nonlinear residual bound.
\end{remark}

\section{Pole-Region-Constrained CLM-CBF Safety Filter Synthesis}
\label{sec:Pole-Region-Constrained_CLM-CBF_Safety_Filter_Synthesis}

It remains to explain how the quantity \(L_{\pi}\) can be obtained and how the CLM-CBF QP can be designed such that \eqref{eq:bounded_pole_region_condition} is satisfied. We revisit the CLM-CBF QP formulation introduced in Section~\ref{sec:Closed_Loop_Model_CBFs}. The safety constraint in \eqref{eq:cl_reference_filter_qp} can be written in the compact form
\begin{equation}
a(\mathbf{x})\mathbf{r}
\geq
b(\mathbf{x}),
\label{eq:compact_affine_constraint}
\end{equation}
with
\begin{equation}
a(\mathbf{x})
=
\frac{\partial h}{\partial \mathbf{x}}B_{\mathrm{cl}},
\label{eq:a_definition}
\end{equation}
and
\begin{equation}
b(\mathbf{x})
=
-
\frac{\partial h}{\partial \mathbf{x}}
A_{\mathrm{cl}}\mathbf{x}
-
\alpha(h(\mathbf{x})).
\label{eq:b_definition}
\end{equation}
The optimizer map is therefore determined by a parametric projection of the desired reference \(\mathbf{r}^{\star}\) onto the admissible reference set defined by \eqref{eq:compact_affine_constraint}.

The CLM-CBF filter is inactive whenever
\begin{equation}
a(\mathbf{x})\mathbf{r}^{\star}
\geq
b(\mathbf{x}).
\label{eq:inactive_reference_condition}
\end{equation}
In this case, \(\mathbf{r}^{\star}\) is feasible and minimizes the objective of \eqref{eq:cl_reference_filter_qp}. Hence,
\begin{equation}
\pi(\mathbf{x},\mathbf{r}^{\star})
=
\mathbf{r}^{\star},
\end{equation}
and therefore
\begin{equation}
\frac{\partial \pi}{\partial \mathbf{x}}
=
0.
\label{eq:inactive_optimizer_sensitivity}
\end{equation}
Consequently, the nominal closed-loop dynamics are recovered exactly whenever the safety filter is inactive. The only relevant case for pole-region analysis is therefore the active regime, in which
\begin{equation}
a(\mathbf{x})\mathbf{r}^{\star}
<
b(\mathbf{x}).
\label{eq:active_reference_condition}
\end{equation}

For the single-constraint case, the active-set solution of \eqref{eq:cl_reference_filter_qp} is available in closed form. Defining
\begin{equation}
\beta(\mathbf{x})
=
a(\mathbf{x})W^{-1}a(\mathbf{x})^T,
\label{eq:beta_definition}
\end{equation}
and
\begin{equation}
\eta(\mathbf{x},\mathbf{r}^{\star})
=
b(\mathbf{x})
-
a(\mathbf{x})\mathbf{r}^{\star},
\label{eq:eta_definition}
\end{equation}
the optimizer in the active region is
\begin{equation}
\pi(\mathbf{x},\mathbf{r}^{\star})
=
\mathbf{r}^{\star}
+
\frac{\eta(\mathbf{x},\mathbf{r}^{\star})}
{\beta(\mathbf{x})}
W^{-1}a(\mathbf{x})^T.
\label{eq:active_optimizer_solution}
\end{equation}
Equation~\eqref{eq:active_optimizer_solution} shows explicitly how the CLM-CBF correction depends on the state. This dependence is the source of both the first-order optimizer-induced perturbation \(\Delta A_{\pi}\) and, if \(a(\mathbf{x})\), \(b(\mathbf{x})\), or the active set vary nonlinearly over the operating region, the higher-order residual \(\mathbf{d}(\delta\mathbf{x},\delta\mathbf{r}^{\star})\).

The local optimizer sensitivity can be obtained by differentiating \eqref{eq:active_optimizer_solution}. Let
\begin{equation}
q(\mathbf{x},\mathbf{r}^{\star})
=
\frac{\eta(\mathbf{x},\mathbf{r}^{\star})}
{\beta(\mathbf{x})}.
\end{equation}
Then
\begin{equation}
\pi(\mathbf{x},\mathbf{r}^{\star})
=
\mathbf{r}^{\star}
+
q(\mathbf{x},\mathbf{r}^{\star})
W^{-1}a(\mathbf{x})^T.
\end{equation}
Assuming that \(\mathbf{r}^{\star}\) is exogenous with respect to \(\mathbf{x}\), differentiation with respect to \(\mathbf{x}\) gives
\begin{equation}
\frac{\partial \pi}{\partial \mathbf{x}}
=
W^{-1}a(\mathbf{x})^T
\frac{\partial q}{\partial \mathbf{x}}
+
q(\mathbf{x},\mathbf{r}^{\star})
W^{-1}
\frac{\partial a(\mathbf{x})^T}{\partial \mathbf{x}}.
\label{eq:general_optimizer_sensitivity}
\end{equation}
Furthermore,
\begin{equation}
\frac{\partial q}{\partial \mathbf{x}}
=
\frac{
\beta(\mathbf{x})
\frac{\partial \eta}{\partial \mathbf{x}}
-
\eta(\mathbf{x},\mathbf{r}^{\star})
\frac{\partial \beta}{\partial \mathbf{x}}
}
{\beta(\mathbf{x})^2},
\label{eq:q_sensitivity}
\end{equation}
where
\begin{equation}
\frac{\partial \eta}{\partial \mathbf{x}}
=
\frac{\partial b}{\partial \mathbf{x}}
-
\mathbf{r}^{\star T}
\frac{\partial a^T}{\partial \mathbf{x}},
\label{eq:eta_sensitivity}
\end{equation}
and
\begin{equation}
\frac{\partial \beta}{\partial \mathbf{x}}
=
2a(\mathbf{x})W^{-1}
\frac{\partial a(\mathbf{x})^T}{\partial \mathbf{x}}.
\label{eq:beta_sensitivity}
\end{equation}
Equations~\eqref{eq:general_optimizer_sensitivity}--\eqref{eq:beta_sensitivity} provide an explicit analytical expression for the sensitivity of the CLM-CBF optimizer map in the active region.

A particularly important special case arises when the safety constraint is affine in the state. We consider
\begin{equation}
h(\mathbf{x})
=
h_0
-
C_h\mathbf{x},
\label{eq:affine_barrier_function}
\end{equation}
where \(C_h\in\mathbb{R}^{1\times n}\). Then
\begin{equation}
\frac{\partial h}{\partial \mathbf{x}}
=
-C_h
\end{equation}
is constant, and therefore
\begin{equation}
a
=
-C_hB_{\mathrm{cl}}
\end{equation}
is also constant. Consequently,
\begin{equation}
\frac{\partial a}{\partial \mathbf{x}}
=
0,
\qquad
\frac{\partial \beta}{\partial \mathbf{x}}
=
0.
\end{equation}
If the extended class-\(\mathcal{K}\) function is chosen as
\begin{equation}
\alpha(h)=\gamma h,
\qquad
\gamma>0,
\label{eq:linear_alpha}
\end{equation}
then
\begin{equation}
b(\mathbf{x})
=
C_hA_{\mathrm{cl}}\mathbf{x}
-
\gamma
\left(
h_0-C_h\mathbf{x}
\right),
\end{equation}
and hence
\begin{equation}
\frac{\partial b}{\partial \mathbf{x}}
=
C_hA_{\mathrm{cl}}
+
\gamma C_h.
\label{eq:b_sensitivity_affine}
\end{equation}
The optimizer sensitivity in the active region reduces to
\begin{equation}
\frac{\partial \pi}{\partial \mathbf{x}}
=
\frac{
W^{-1}a^T
}{
aW^{-1}a^T
}
\left(
C_hA_{\mathrm{cl}}
+
\gamma C_h
\right).
\label{eq:affine_constraint_optimizer_sensitivity}
\end{equation}
Thus, for the affine single-constraint case, an analytical sensitivity bound is
\begin{equation}
L_{\pi}
=
\left\|
\frac{
W^{-1}a^T
}{
aW^{-1}a^T
}
\left(
C_hA_{\mathrm{cl}}
+
\gamma C_h
\right)
\right\|.
\label{eq:affine_constraint_Lpi}
\end{equation}
Substitution of \eqref{eq:affine_constraint_Lpi} into \eqref{eq:bounded_pole_region_condition} gives an explicit design condition for the CLM-CBF QP parameters:
\begin{equation}
\left\|
\frac{
W^{-1}a^T
}{
aW^{-1}a^T
}
\left(
C_hA_{\mathrm{cl}}
+
\gamma C_h
\right)
\right\|
\leq
\frac{\lambda_{\min}(Q)}
{2\left\|P\right\|
\left\|B_{\mathrm{cl}}\right\|}.
\label{eq:explicit_qp_design_condition}
\end{equation}

Several observations follow from \eqref{eq:explicit_qp_design_condition}. First, the influence of the QP weighting matrix \(W\) enters through the weighted projection direction
\begin{equation}
\frac{
W^{-1}a^T
}{
aW^{-1}a^T
}.
\end{equation}
Therefore, the relative weighting between reference channels affects how the CLM-CBF correction is distributed among the available reference directions. In contrast, a uniform scalar scaling \(W\mapsto \rho W\), \(\rho>0\), cancels in this expression for the single-constraint case and therefore does not change the optimizer sensitivity. Hence, it is the geometry of \(W\), not merely its magnitude, that is relevant for pole-region preservation.

Second, the CBF gain \(\gamma\) appears directly in the factor
\begin{equation}
C_hA_{\mathrm{cl}}
+
\gamma C_h.
\end{equation}
Thus, \(\gamma\) affects both the safety-filter activation behavior and the sensitivity of the optimizer map. A large value of \(\gamma\) may reduce conservatism away from the boundary, but may also increase the local sensitivity once the constraint becomes active. Conversely, a smaller value of \(\gamma\) may reduce the optimizer sensitivity, but can lead to earlier or more restrictive safety intervention. Therefore, \(\gamma\) should not be selected solely from a safety perspective, but also with respect to the admissible sensitivity bound \eqref{eq:bounded_pole_region_condition}.

Third, the condition \eqref{eq:explicit_qp_design_condition} connects the safety-filter parameters directly to the robustness properties of the nominal closed-loop system. The matrices \(P\) and \(Q\) are obtained from the pole-region certificate of the robust controller, while \(L_{\pi}\) is induced by the CLM-CBF QP. Hence, the safety-filter design becomes a robustness-constrained synthesis problem.

Similarly to Section~\ref{sec:CL_Controller_Design}, we obtain a constrained multi-objective optimization problem. With respect to the safety filter, the following offline synthesis problem is considered:
\begin{equation}
\begin{aligned}
\min_{W,\gamma}
\quad &
J_{\mathrm{sf}}(W,\gamma)
\\
\mathrm{s.t.}
\quad &
W=W^T>0,
\\
&
\gamma>0,
\\
&
L_{\pi}(W,\gamma)
\leq
L_{\pi,\max},
\\
&
L_d(W,\gamma)
\leq
L_{d,\max},
\\
&
\mathcal{F}_{\mathrm{safe}}(W,\gamma)\neq\emptyset.
\end{aligned}
\label{eq:offline_safety_filter_synthesis_static}
\end{equation}
Here, \(J_{\mathrm{sf}}\) denotes a design objective for the safety filter, for example penalizing reference deviation, excessive intervention, or loss of command fidelity. The condition
\begin{equation}
\mathcal{F}_{\mathrm{safe}}(W,\gamma)\neq\emptyset
\end{equation}
denotes feasibility of the CLM-CBF QP over the considered operating region. The constraint
\begin{equation}
L_{\pi}(W,\gamma)
\leq
L_{\pi,\max}
\end{equation}
guarantees, through Theorem~\ref{thm:linearized_pole_region_preservation}, that the first-order optimizer-induced perturbation remains inside the available pole-region robustness margin. The additional constraint on \(L_d(W,\gamma)\), with the upper bound $L_{d,\max}$ to be chosen, accounts for the higher-order nonlinear residual and ensures that the certified local validity radius in \eqref{eq:local_validity_radius} remains sufficiently large for the operating region of interest. Thus, the safety-filter synthesis problem jointly balances safety enforcement, reference fidelity, pole-region preservation, and local nonlinear robustness.
\begin{remark}[Computation of optimizer sensitivity and nonlinear residual bounds]
For more general barrier functions, or for cases in which multiple independent safety constraints are considered simultaneously, a closed-form expression such as \eqref{eq:affine_constraint_Lpi} is generally not available. In that case, the optimizer sensitivity \(L_{\pi}\) can be computed or approximated through parametric QP sensitivity analysis. Moreover, the same local sensitivity information can be used to estimate the higher-order residual bound \(L_d\) appearing in \eqref{eq:quadratic_residual_bound}.

Consider the parametric QP
\begin{equation}
\begin{aligned}
\pi(\mathbf{x},\mathbf{r}^{\star})
=
\arg\min_{\mathbf{r}}
\quad &
\frac{1}{2}
\left(
\mathbf{r}-\mathbf{r}^{\star}
\right)^T
W
\left(
\mathbf{r}-\mathbf{r}^{\star}
\right)
\\
\mathrm{s.t.}
\quad &
A_c(\mathbf{x})\mathbf{r}
\geq
b_c(\mathbf{x}),
\end{aligned}
\label{eq:general_parametric_qp}
\end{equation}
where \(A_c(\mathbf{x})\) collects the constraint directions and \(b_c(\mathbf{x})\) the corresponding lower bounds. Let \(\mathcal{A}\) denote a locally fixed active set. Under the usual regularity conditions for parametric quadratic programs, including uniqueness of the optimizer and nonsingularity of the active-set KKT matrix, the optimizer map is locally differentiable within regions in which the active set remains unchanged.

For a fixed active set \(\mathcal{A}\), the active constraints can be treated as equalities. The corresponding KKT conditions are
\begin{equation}
W(\mathbf{r}-\mathbf{r}^{\star})
-
A_{\mathcal{A}}(\mathbf{x})^T\boldsymbol{\lambda}
=
0,
\label{eq:kkt_stationarity}
\end{equation}
\begin{equation}
A_{\mathcal{A}}(\mathbf{x})\mathbf{r}
-
b_{\mathcal{A}}(\mathbf{x})
=
0.
\label{eq:kkt_active_constraints}
\end{equation}

Here, \(A_{\mathcal A}(\mathbf{x})\) denotes the matrix obtained by selecting from \(A_c(\mathbf{x})\) only those rows whose indices belong to the active set \(\mathcal A\). Similarly, \(b_{\mathcal A}(\mathbf{x})\) denotes the vector obtained by selecting from \(b_c(\mathbf{x})\) only those components whose indices belong to \(\mathcal A\). 

Differentiating \eqref{eq:kkt_stationarity}--\eqref{eq:kkt_active_constraints} in a direction \(\delta\mathbf{x}\), while keeping \(\mathbf{r}^{\star}\) fixed, gives
\begin{equation}
\begin{bmatrix}
W & -A_{\mathcal{A}}^T\\
A_{\mathcal{A}} & 0
\end{bmatrix}
\begin{bmatrix}
\delta\mathbf{r}\\
\delta\boldsymbol{\lambda}
\end{bmatrix}
=
\begin{bmatrix}
\left(D_{\mathbf{x}}A_{\mathcal{A}}^T[\delta\mathbf{x}]\right)\boldsymbol{\lambda}\\
D_{\mathbf{x}}b_{\mathcal{A}}[\delta\mathbf{x}]
-
\left(D_{\mathbf{x}}A_{\mathcal{A}}[\delta\mathbf{x}]\right)\mathbf{r}
\end{bmatrix}.
\label{eq:kkt_sensitivity_system}
\end{equation}

The mappings \(D_{\mathbf{x}}A_{\mathcal A}[\delta\mathbf{x}]\) and \(D_{\mathbf{x}}b_{\mathcal A}[\delta\mathbf{x}]\) denote the directional derivatives of \(A_{\mathcal A}(\mathbf{x})\) and \(b_{\mathcal A}(\mathbf{x})\), respectively, with respect to \(\mathbf{x}\) in the direction \(\delta\mathbf{x}\).

Solving \eqref{eq:kkt_sensitivity_system} for a basis of perturbation directions yields the Jacobian
\begin{equation}
\frac{\partial \pi}{\partial \mathbf{x}}.
\end{equation}
The sensitivity bound can then be evaluated as
\begin{equation}
L_{\pi}
=
\sup_{\mathbf{x}\in\Omega}
\left\|
\frac{\partial \pi}{\partial \mathbf{x}}
\right\|.
\label{eq:Lpi_numerical_sampling}
\end{equation}

If the nonlinear residual bound in \eqref{eq:quadratic_residual_bound} is also required, it can be estimated from the deviation between the exact optimizer map and its first-order approximation. For each sampled point \((\mathbf{x}_e,\mathbf{r}^{\star}_e)\) and perturbation \((\delta\mathbf{x},\delta\mathbf{r}^{\star})\), define
\begin{align}
\mathbf{d}
\!\left(
\delta\mathbf{x},
\delta\mathbf{r}^{\star}
\right)
&=
\mathbf{f}_{\mathrm{sf}}
\!\left(
\mathbf{x}_e+\delta\mathbf{x},
\mathbf{r}^{\star}_e+\delta\mathbf{r}^{\star}
\right)
\nonumber\\
&\quad
-
A_{\mathrm{eff}}\delta\mathbf{x}
-
B_{\mathrm{eff}}\delta\mathbf{r}^{\star}.
\end{align}
Then a numerical local estimate of \(L_d\) can be obtained from
\begin{equation}
L_d
\approx
\sup_
{
(\delta\mathbf{x},\delta\mathbf{r}^{\star})\in\mathcal{B}_{\rho}
}
\frac{
\left\|
\mathbf{d}
\!\left(
\delta\mathbf{x},
\delta\mathbf{r}^{\star}
\right)
\right\|
}{
\left\|
\begin{bmatrix}
\delta\mathbf{x}\\
\delta\mathbf{r}^{\star}
\end{bmatrix}
\right\|^2
}.
\label{eq:Ld_numerical_estimate}
\end{equation}
This estimate should be evaluated only over neighborhoods in which the active set remains unchanged, or otherwise augmented by a conservative safety factor to account for active-set transitions. In practice, one may sample the operating region \(\Omega\), solve the QP, identify the active set, compute the first-order sensitivity through \eqref{eq:kkt_sensitivity_system}, and then evaluate both \(L_{\pi}\) and \(L_d\) over local perturbation neighborhoods. The resulting bounds are then used in \eqref{eq:offline_safety_filter_synthesis_static} to certify both linearized pole-region preservation and local nonlinear validity of the safety-filtered dynamics.
\end{remark}

%% file: Sections/Application.tex
The closed-loop model-based safety filter introduced in the previous subsections is now specialized to the flight envelope protection problem of the considered longitudinal missile dynamics. As established in Section~\ref{sec:Problem_Formulation_Missile}, the admissible operating region is defined by the angle-of-attack constraints in \eqref{eq:alpha_constraint}. In accordance with \cite{Autenrieb_2025}, the upper and lower bounds on $\alpha(t)$ are enforced via two distinct CBFs. While the simultaneous presence of multiple constraints may in general lead to feasibility issues in the associated quadratic program, the structure of the FEP problem mitigates this effect. In particular, the system typically evolves toward either the upper or lower boundary depending on the direction of motion, such that only one constraint is active at a given time.

To enforce these constraints defined in \eqref{eq:alpha_constraint}, the following CBF candidates are introduced,
\begin{equation}
h_{\alpha,u}(\mathbf{x}(t),\sigma_0)=\alpha_{\max}(\sigma_0)-\alpha(t),
\end{equation}
\begin{equation}
h_{\alpha,l}(\mathbf{x}(t),\sigma_0)=\alpha(t)-\alpha_{\min}(\sigma_0),
\end{equation}
ensuring that the system stays within the safe set
\begin{equation}
S_\alpha(\sigma_0)
=
\Bigl\{
\mathbf{x}(t) \in \mathbb{R}^2 \;\Bigm|\;
\alpha_{\min}(\sigma_0) \leq \alpha(t) \leq \alpha_{\max}(\sigma_0)
\Bigr\}.
\end{equation}
In particular, the safety condition is satisfied if \(h_{\alpha,u}(\mathbf{x}(t),\sigma_0) \geq 0\) and \( h_{\alpha,l}(\mathbf{x}(t),\sigma_0) \geq 0\), which implies that the system state remains in the intersection of the sets defined by the two CBF candidates for all \(t \geq 0\).

By using the concepts of Section~\ref{sec:Closed_Loop_Model_CBFs}, the CLM-CBF-based approach enforces the flight envelope constraints directly at the reference-level. Since the constraints are scalar and affine, the optimization problem remains convex and computationally efficient. At the same time, the underlying controller is left unchanged, such that its stability and performance properties are fully preserved. Applying the closed-loop CBF condition from Theorem~\ref{thm:robust_cl_cbf} to the system \eqref{eq:closed_loop_system_filter} yields two scalar CLM-CBF constraints of the form
\begin{equation*}
\begin{aligned}
\frac{\partial h_{\alpha,u}(\mathbf{x}(t),\sigma_0)}{\partial \mathbf{x}}
&\left(A_{\mathrm{cl}}(\sigma_0) \mathbf{x}(t) + B_{\mathrm{cl}}(\sigma_0) \mathbf{r}(t) \right)\\
&\geq -\gamma_{\alpha,u}\,h_{\alpha,u}(\mathbf{x}(t),\sigma_0),
\end{aligned}
\end{equation*}
\begin{equation*}
\begin{aligned}
\frac{\partial h_{\alpha,l}(\mathbf{x}(t),\sigma_0)}{\partial \mathbf{x}}
&\left(A_{\mathrm{cl}}(\sigma_0) \mathbf{x}(t)  + B_{\mathrm{cl}}(\sigma_0) \mathbf{r}(t) \right)\\
&\geq -\gamma_{\alpha,l}\,h_{\alpha,l}(\mathbf{x}(t),\sigma_0),
\end{aligned}
\end{equation*}
with $\gamma_{\alpha,u},\gamma_{\alpha,l}>0$.

In addition to the angle-of-attack constraints, it is also common and practically relevant in flight control applications to impose bounds on the pitch rate \(q(t)\). While the angle of attack directly characterizes the aerodynamic operating regime, the pitch rate reflects the rotational dynamics of the system and is closely related to actuator limitations and achievable control performance. In particular, excessive pitch rates may lead to actuator saturation or induce undesirable transient behavior. Therefore, it is beneficial to explicitly constrain \(q(t)\) within prescribed limits.

Analogous to the angle-of-attack constraints, the following CBF candidates are introduced:
\begin{equation}
h_{q,u}(\mathbf{x}(t),\sigma_0) = q_{\max}(\sigma_0) - q(t),
\end{equation}
\begin{equation}
h_{q,l}(\mathbf{x}(t),\sigma_0) = q(t) - q_{\min}(\sigma_0),
\end{equation}
defining the corresponding admissible set
\begin{equation}
S_q
\triangleq 
\Bigl\{
\mathbf{x}(t) \in \mathbb{R}^2 \;\Bigm|\;
q_{\min} \leq q(t) \leq q_{\max}
\Bigr\}.
\end{equation}

Applying the CLM-CBF concept to the pitch rate constraints yields
\begin{equation*}
\begin{aligned}
\frac{\partial h_{q,u}(\mathbf{x}(t),\sigma_0)}{\partial \mathbf{x}}
&\left( A_{\mathrm{cl}}(\sigma_0) \mathbf{x}(t) + B_{\mathrm{cl}}(\sigma_0) \mathbf{r}(t) \right)\\
&\geq -\gamma_{q,u}\, h_{q,u}(\mathbf{x}(t),\sigma_0),
\end{aligned}
\end{equation*}
\begin{equation*}
\begin{aligned}
\frac{\partial h_{q,l}(\mathbf{x}(t),\sigma_0)}{\partial \mathbf{x}}
&\left( A_{\mathrm{cl}}(\sigma_0) \mathbf{x}(t) + B_{\mathrm{cl}}(\sigma_0) \mathbf{r}(t) \right)\\
&\geq -\gamma_{q,l}\, h_{q,l}(\mathbf{x}(t),\sigma_0),
\end{aligned}
\end{equation*}
with \(\gamma_{q,u}, \gamma_{q,l} > 0\).

Together with the previously introduced angle-of-attack constraints, this results in a multi-constraint safety formulation, where forward invariance must be ensured with respect to the intersection of all individual safe sets. Consequently, the admissible operating region is given by the intersection \(S_\alpha(\sigma_0) \cap S_q\).

In addition to the state constraints, it is necessary to ensure that the certified reference signal \(r(t)\) remains compatible with the physical actuator limitations. Although the CLM-CBF formulation enforces safety with respect to the closed-loop state dynamics, the resulting control input is still generated through the feedback law in \eqref{eqn:robust_controll_law}, such that the choice of \(r(t)\) directly determines the actuator command. Consequently, the reference signal cannot be selected arbitrarily, but must be restricted to values that yield admissible control inputs.

Let the actuator be subject to magnitude constraints of the form
\begin{equation*}
\mathbf{u}_{\min}(\sigma_0) \leq \mathbf{u}(t) \leq \mathbf{u}_{\max}(\sigma_0).
\end{equation*}
Substituting the control law \eqref{eqn:robust_controll_law} yields the affine constraint
\begin{equation*}
\mathbf{u}_{\min}(\sigma_0) \leq K_x(\sigma_0)\mathbf{x}(t) + K_r(\sigma_0)\mathbf{r}(t) \leq \mathbf{u}_{\max}(\sigma_0).
\end{equation*}

This relation defines a state-dependent admissible set in the reference variable,
\begin{equation*}
\begin{aligned}
\mathcal{R}_u(\mathbf{x}(t),\sigma_0)
\triangleq
\Bigl\{
\mathbf{r}(t) \in \mathbb{R}^p \;\Bigm|\;
&\mathbf{u}_{\min}(\sigma_0) \leq  K_x(\sigma_0)\mathbf{x}(t) +\\ 
& K_r(\sigma_0)\mathbf{r}(t) \leq \mathbf{u}_{\max}(\sigma_0)
\Bigr\},
\end{aligned}
\end{equation*}
which characterizes all references that can be realized by the controller-actuator combination without violating input constraints.

Since the considered reference signal is scalar, i.e., \(\mathbf{r}(t)\in\mathbb{R}\), the actuator-induced admissible reference set can be expressed directly as an interval constraint. In particular, the inequality
\begin{equation*}
\mathbf{u}_{\min}(\sigma_0) \leq K_x(\sigma_0)\mathbf{x}(t) + K_r(\sigma_0)\mathbf{r}(t) \leq \mathbf{u}_{\max}(\sigma_0)
\end{equation*}
can be solved explicitly for \(\mathbf{r}(t)\). If \(K_r(\sigma_0)>0\), division by \(K_r(\sigma_0)\) preserves the inequality direction, such that
\begin{equation*}
\frac{\mathbf{u}_{\min}(\sigma_0)-K_x(\sigma_0)x(t)}{K_r(\sigma_0)}
\leq
\mathbf{r}(t)
\leq
\frac{\mathbf{u}_{\max}(\sigma_0)-K_x(\sigma_0)x(t)}{K_r(\sigma_0)}.
\end{equation*}
By contrast, if \(K_r(\sigma_0)<0\), division by \(K_r(\sigma_0)\) reverses the inequality direction, yielding
\begin{equation*}
\frac{\mathbf{u}_{\max}(\sigma_0)-K_x(\sigma_0)x(t)}{K_r(\sigma_0)}
\leq
\mathbf{r}(t)
\leq
\frac{\mathbf{u}_{\min}(\sigma_0)-K_x(\sigma_0)x(t)}{K_r(\sigma_0)}.
\end{equation*}

Hence, in the considered scalar-reference setting, the actuator magnitude limits induce explicit lower and upper bounds on the admissible reference signal, which can be incorporated directly into the quadratic program. We define the magnitude constraints on the input as:
\begin{equation*}
\mathbf{r}_{\max}(\mathbf{x}(t),\sigma_0)
=
\frac{\mathbf{u}_{\max}(\sigma_0)-K_x(\sigma_0)\mathbf{x}(t)}{K_r(\sigma_0)},
\end{equation*}
\begin{equation*}
\mathbf{r}_{\min}(\mathbf{x}(t),\sigma_0)
=
\frac{\mathbf{u}_{\min}(\sigma_0)-K_x(\sigma_0)\mathbf{x}(t)}{K_r(\sigma_0)}.
\end{equation*}

The actuator magnitude constraints can be written in the compact form
\begin{equation*}
\mathbf{r}_{\min}(\mathbf{x}(t),\sigma_0)\leq \mathbf{r}(t)\leq \mathbf{r}_{\max}(\mathbf{x}(t),\sigma_0).
\end{equation*}
Thus, in the considered scalar-reference setting, the actuator limits induce explicit upper and lower bounds on the admissible reference signal, which can be incorporated directly into the quadratic program as affine inequality constraints.

Based on the previously derived CLM-CBF conditions, the reference-level safety filter is formulated as the following quadratic program:
\begin{align}
\mathbf{r}_{\mathrm{safe}}(t)
= \arg\min_{\mathbf{r}} &\quad  \frac{1}{2}
\left(
\mathbf{r}-\mathbf{r}^{\star}
\right)^T
W
\left(
\mathbf{r}-\mathbf{r}^{\star}
\right) \label{eq:CLM_QP_results}\\
\text{s.t.} \quad
& \dot{h}_{\alpha,u}
\geq -\gamma_{\alpha,u}\,h_{\alpha,u}(\mathbf{x}(t),\sigma_0), \notag\\
& \dot{h}_{\alpha,l}
\geq -\gamma_{\alpha,l}\,h_{\alpha,l}(\mathbf{x}(t),\sigma_0), \notag\\
& \dot{h}_{q,u}
\geq -\gamma_{q,u}\,h_{q,u}(\mathbf{x}(t)), \notag\\
& \dot{h}_{q,l}
\geq -\gamma_{q,l}\,h_{q,l}(\mathbf{x}(t)), \notag\\
& \mathbf{r}_{\min}(\mathbf{x}(t),\sigma_0)\leq \mathbf{r}(t)\leq \mathbf{r}_{\max}(\mathbf{x}(t),\sigma_0), \notag
\end{align}
where
\begin{align*}
\dot{h}_{\alpha,u}
&:= \frac{\partial h_{\alpha,u}(\mathbf{x}(t),\sigma_0)}{\partial \mathbf{x}}
\left(A_{\mathrm{cl}}(\sigma_0)\mathbf{x}(t)+B_{\mathrm{cl}}(\sigma_0)\mathbf{r}(t)\right),\\[0.5em]
\dot{h}_{\alpha,l}
&:= \frac{\partial h_{\alpha,l}(\mathbf{x}(t),\sigma_0)}{\partial \mathbf{x}}
\left(A_{\mathrm{cl}}(\sigma_0)\mathbf{x}(t)+B_{\mathrm{cl}}(\sigma_0)\mathbf{r}(t)\right),\\[0.5em]
\dot{h}_{q,u}
&:= \frac{\partial h_{q,u}(\mathbf{x}(t),\sigma_0)}{\partial \mathbf{x}}
\left(A_{\mathrm{cl}}(\sigma_0)\mathbf{x}(t)+B_{\mathrm{cl}}(\sigma_0)\mathbf{r}(t)\right),\\[0.5em]
\dot{h}_{q,l}
&:= \frac{\partial h_{q,l}(\mathbf{x}(t),\sigma_0)}{\partial \mathbf{x}}
\left(A_{\mathrm{cl}}(\sigma_0)\mathbf{x}(t)+B_{\mathrm{cl}}(\sigma_0)\mathbf{r}(t)\right).
\end{align*}

The choice of the CLM-CBF design parameters $W$, $\gamma_{\alpha,u}$,$\gamma_{\alpha,l}$,$\gamma_{q,u}$ and $\gamma_{q,l}$ is based on the proposed safety filter synthesis from Section~\ref{sec:Pole-Region-Constrained_CLM-CBF_Safety_Filter_Synthesis}.

\subsection{Numerical Results}
\label{sec:Numerical Results}

In the following, the proposed CLM-CBF framework from Section~\ref{sec:Closed_Loop_Model_CBFs} is evaluated for the flight envelope protection problem introduced in Section~\ref{sec:Problem_Formulation_Missile}. The considered plant is the longitudinal missile model discussed in Section~\ref{Sect: Nonlinear_Missile_Model}, where the controlled states are the angle of attack $\alpha(t)$ and the pitch rate $q(t)$, and the fin deflection $\delta(t)$ acts as control input. The nominal tracking controller used throughout the simulations is the state-feedback controller introduced in Section~\ref{sec:Closed_Loop_Model_CBFs}. The purpose of the following case study is to demonstrate that the proposed CLM-CBF formulation provides the same safety behavior as the classical input-level formulation, without weakening robustness margins or altering the desired closed-loop dynamics. 

In the first step, we assess the CLM-CBF-based safety filtering, which was designed as proposed in Section~\ref{sec:Pole-Region-Constrained_CLM-CBF_Safety_Filter_Synthesis}, with respect to the closed-loop robustness properties. To do so, we revisit the analysis previously conducted in Section~\ref{sec:Motivating Example} for a simplified benchmark system and now apply it to the full missile model, controlled by the nominal feedback law \eqref{eqn:robust_controll_law} designed according to the robust control design outlined in Section~\ref{sec:Closed_Loop_Model_CBFs}. The resulting controller ensures satisfactory tracking performance together with meaningful robustness margins in the nominal case. In particular, the margins of the proposed gains at the considered operating point $\sigma_0$ correspond to a gain margin of $\pm 16.4\;\mathrm{dB}$ and a phase margin of $\pm 72.7877^\circ$, confirming that the nominal closed-loop system possesses a reasonable degree of robustness.

Similar to the previous analysis, a reduced FEP problem is considered in which only pitch-rate constraints are enforced. This allows for a simpler analysis and a direct comparison with the simplified case analyzed previously. The pitch-rate constraint is defined as
\begin{equation*}
    q_{\min}(\sigma_0) \le q(t) \le q_{\max}(\sigma_0),
\end{equation*}
with $q_{\min}(\sigma_0)=-30^\circ/\mathrm{s}$ and $q_{\max}(\sigma_0)=30^\circ/\mathrm{s}$. The corresponding control barrier function candidates are introduced as
\begin{equation*}
    h_{q,u}(\mathbf{x}(t))=q_{\max}(\sigma_0)-q, \qquad h_{q,l}(\mathbf{x}(t))=q-q_{\min}(\sigma_0).
\end{equation*}
Since the constraint has relative degree one, the input-level CBF safety filter is formulated as the quadratic program
\begin{align*}
    \mathbf{u}_{\mathrm{safe}}(t)=\arg\min_u \quad & \|\mathbf{u}(t)-\mathbf{u}^\star(t)\|_2^2 \notag\\
    \text{s.t.}\quad
    & \dot h_{q,u}(\mathbf{x}(t),\mathbf{u}(t))+\gamma_{q,u} h_{q,u}(\mathbf{x}(t)) \ge 0, \notag\\
    & \dot h_{q,l}(\mathbf{x}(t),\mathbf{u}(t))+\gamma_{q,l} h_{q,l}(\mathbf{x}(t)) \ge 0,
    \label{eq:qp_q_only_results}
\end{align*}
where $\mathbf{u}^\star(t)$ denotes the nominal control input. This formulation corresponds directly to the one used in the motivating example of Section~\ref{sec:Motivating Example}.

Two representative operating conditions are considered. First, an interior point of the safe set is selected, corresponding to a configuration in which the safety filter remains inactive. This point is identified through a scan of the admissible set and corresponds to $\mathbf{x}_{\mathrm{inactive}} = [8.1^\circ, -2.4^\circ/\mathrm{s}]^\top$, for which both barrier constraints are strictly satisfied and the associated Lagrange multipliers are zero. Linearization of the filtered closed-loop dynamics around this point and the analysis of the margins shows that the system has pole locations corresponding to a gain margin of $\pm 16.4\;\mathrm{dB}$ and a phase margin of $\pm 72.7877^\circ$, which is the same as for the nominal system without any safety filter. These results confirm that, as long as the system operates sufficiently far from the set boundaries, the safety filter remains inactive and the closed-loop behavior is indistinguishable from the closed-loop system without a CLM-CBF-based safety filter. Next, an active constraint scenario is considered by initializing the system directly on the upper pitch-rate boundary, i.e., $\mathbf{x}_{\mathrm{active}}=[-12^\circ, 30^\circ/\mathrm{s}]^\top$. In this condition, the relevant pitch rate constraint becomes active, leading to a non-zero Lagrange multiplier. The resulting linearized closed-loop poles indicate a substantial deviation from the nominal closed-loop dynamics. In particular, the oscillatory pole pair observed in the nominal case collapses into two real poles, reflecting a significantly slower and more constrained system response. This structural change in the closed-loop dynamics is accompanied by a severe degradation of robustness margins. The corresponding gain margin is $7.56\;\mathrm{dB},$ and the phase margin is $44.5^\circ$, indicating that the system operates very close to the boundary of stability. This behavior is again fully consistent with the observations made in the simplified case: once the safety filter becomes active, the effective feedback law is altered, resulting in a state-dependent modification of the closed-loop dynamics and a corresponding reduction in robustness. The quantitative comparison is summarized in Table~\ref{tab:poles_margins_missile}.




\begin{table*}[t]
\centering
\caption{Comparison of local closed-loop pole locations and disk-based robustness margins for nominal, inactive, and active safety-filter configurations.}
\label{tab:poles_margins_missile}
\begin{tabular}{lccc}
\toprule
\textbf{Case} & \textbf{Gain Margin (dB)} & \textbf{Phase Margin (deg)} \\
\midrule
Nominal 
& $\pm 16.400$ 
& $\pm72.7877$ \\

Inactive CLM-CBF 
& $\pm 16.400$ 
& $\pm72.7877$ \\

Active CLM-CBF 
& $\pm 7.56$ 
& $\pm 44.5$ \\

\bottomrule
\end{tabular}
\end{table*}

A simulation study examines the dynamic behavior under a continuously varying unsafe reference. The corresponding results are presented in Fig.~\ref{fig:time_series_comparison}. In this case, a sinusoidal reference command with an amplitude exceeding the admissible envelope is applied, thereby repeatedly forcing activation and deactivation of the safety filters. The imposed constraints correspond to the limits introduced in Sect.~\ref{sec:Problem Formulatio}, namely $\alpha \in [-15^\circ,\,15^\circ]$ and $q \in [-30^\circ/\text{s},\,30^\circ/\text{s}]$, together with actuator magnitude and rate limits of $\delta \in [-30^\circ,\,30^\circ]$ and $\dot{\delta} \in [-90^\circ/\text{s},\,90^\circ/\text{s}]$. This scenario therefore provides a comprehensive assessment of constraint enforcement under dynamic excitation.

\begin{figure*}[ht]
    \centering
    \includegraphics[width=\textwidth]{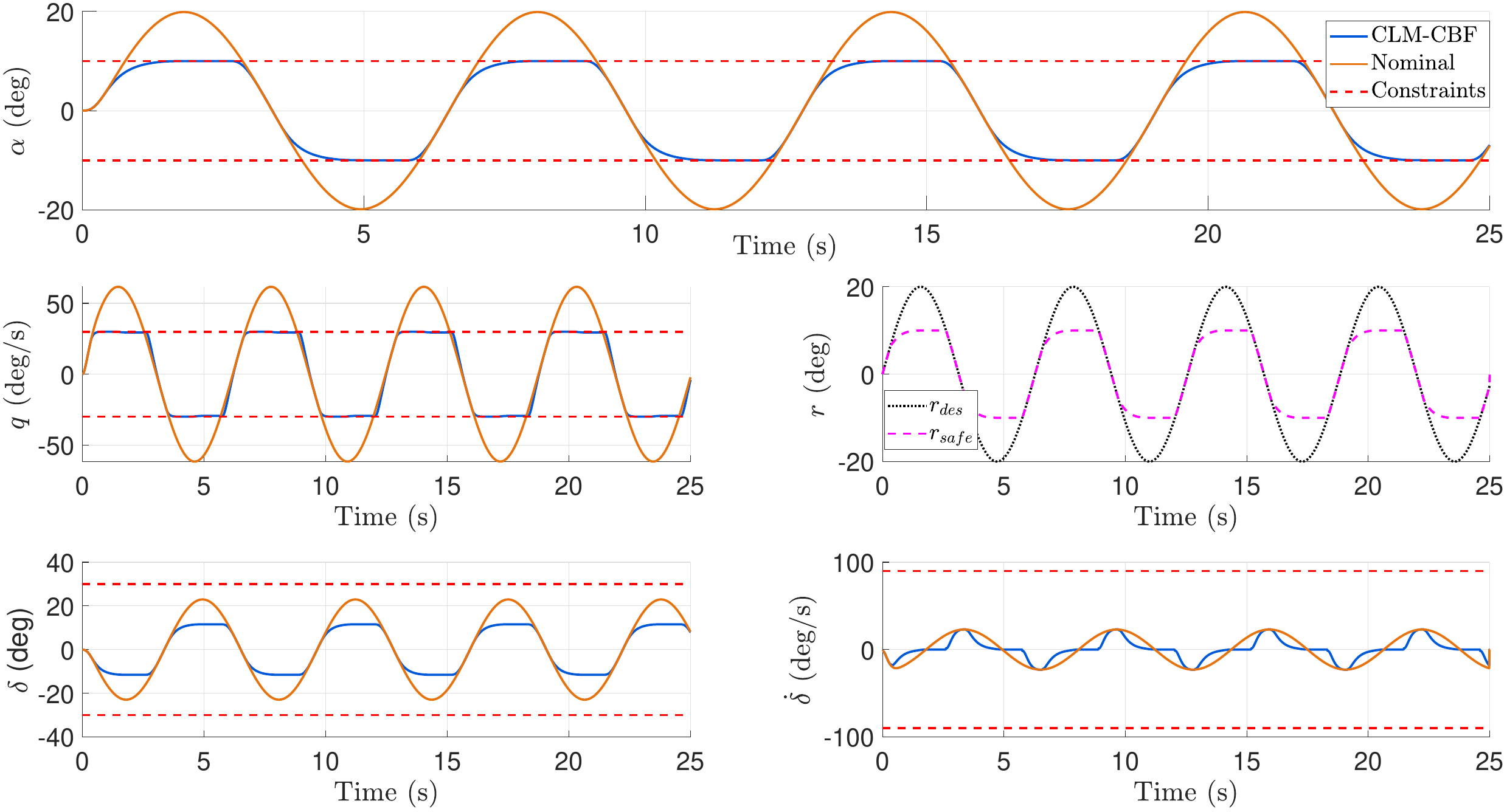}
    \caption{Time-series comparison of nominal control and CLM-CBF under a sinusoidal reference command. The plots show the angle of attack $\alpha(t)$, pitch rate $q(t)$, and fin deflection $\delta(t)$, together with their admissible bounds.}
    \label{fig:time_series_comparison}
\end{figure*}

The time-series results in Fig.~\ref{fig:time_series_comparison} compare the nominal controller and the CLM-CBF formulation. It can be observed that the nominal system response in orange violates the angle-of-attack constraints whenever the reference exceeds the admissible range. In contrast, the response of the closed-loop system with a CLM-CBF-based safety filter consistently enforces the prescribed bounds over the entire simulation horizon. The safety filter effectively reshapes the unsafe sinusoidal reference into a feasible trajectory, ensuring that the resulting angle-of-attack response remains strictly within the admissible set at all times. A similar behavior is observed for the pitch-rate dynamics. While the nominal system exceeds the prescribed bounds on $q(t)$, the CLM-CBF guarantees strict satisfaction of the pitch-rate constraints. Importantly, the safety filter operates on the intersection of both constraints, simultaneously enforcing limits on $\alpha(t)$ and $q(t)$ without violation. Furthermore, the actuator signals remain within the specified magnitude and rate limits, demonstrating that the choice of $\gamma_{\alpha,u}$, $\gamma_{\alpha,l}$, $\gamma_{q,u}$ and $\gamma_{q,l}$ within the QP formulation is appropriate.

Overall, these results validate the theoretical developments of Section~\ref{sec:Closed_Loop_Model_CBFs}. For the considered linear missile dynamics and affine state-feedback structure, the proposed CLM-CBF is not merely an alternative implementation of safety filtering, but a formulation that is behaviorally equivalent to the classical OLM-CBF. This is a practically relevant result, since it shows that safety can be enforced at the reference-level without any loss in performance, safety, or closed-loop fidelity. Consequently, the choice between both formulations can be made based on architectural considerations, such as modularity, integration into hierarchical controllers, or accessibility of actuator commands, rather than on differences in safety performance.

%% file: Sections/Conclusion.tex
This paper introduced a CLM-CBF framework for flight envelope protection. In contrast to conventional OLM-CBF approaches, which enforce safety at the control input-level and may alter the closed-loop dynamics, the proposed method enforces safety at the reference-level using an explicit representation of the closed-loop system. This perspective enables constraint enforcement while preserving critical phase and gain margins of the designed control system. In particular, the framework allows safety filtering to be integrated without modifying the controller output, thereby maintaining the desired closed-loop behavior and enabling modular integration into existing flight control architectures.

The effectiveness of the approach was demonstrated on a nonlinear longitudinal missile model with realistic aerodynamic and actuator characteristics. Simulation results show that the method reliably prevents constraint violations during aggressive maneuvers while maintaining nominal tracking performance when no safety constraints are active. The results further demonstrate that the CLM-CBF-based safety filter achieves constraint enforcement performance comparable to conventional OLM-CBF approaches, while preserving the nominal closed-loop behavior of the underlying control system. 

Future work will focus on extending the framework to more complex scenarios, including multi-dimensional flight dynamics, tighter uncertainty bounds, and experimental validation on real-world systems.

%% file: Database_Bibliography_Upd.bib
@inproceedings{Grondman2018,
  author = {Grondman, F. and Looye, G. and Kuchar, R. O. and Chu, Q. P. and van Kampen, E.-J.},
  title = {Design and Flight Testing of Incremental Nonlinear Dynamic Inversion-based Control Laws for a Passenger Aircraft},
  booktitle = {AIAA Guidance, Navigation, and Control Conference},
  year = {2018},
  note = {AIAA-2018-0385}
}

@INPROCEEDINGS{Stougie2024-ym,
  title     = "Incremental nonlinear dynamic inversion control with flight
               envelope protection for the flying-{V}",
  author    = "Stougie, Jurian and Pollack, Tijmen and Van Kampen, Erik-Jan",
  booktitle = "AIAA SCITECH 2024 Forum",
  publisher = "American Institute of Aeronautics and Astronautics",
  address   = "Reston, Virginia",
  month     =  jan,
  year      =  2024
}

@article{ames2016control, 

  title={Control barrier function based quadratic programs for safety critical systems}, 

  author={Ames, Aaron D and Xu, Xiangru and Grizzle, Jessy W and Tabuada, Paulo}, 

  journal={IEEE Transactions on Automatic Control}, 

  volume={62}, 

  number={8}, 

  pages={3861--3876}, 

  year={2016}, 

  publisher={IEEE} 

}

@INPROCEEDINGS{Ames_2014,
author={Ames, Aaron D. and Grizzle, Jessy W. and Tabuada, Paulo},
booktitle={53rd IEEE Conference on Decision and Control}, 
title={Control barrier function based quadratic programs with application to adaptive cruise control}, 
year={2014},
volume={},
number={},
pages={6271-6278},
doi={10.1109/CDC.2014.7040372      }}


%% file: sample.bib
@article{Autenrieb_2024b,
author = {Autenrieb, Johannes and Shin, Hyo-Sang},
title = {Complementary Filter-Based Incremental Nonlinear Model Following Control Design for a Tilt-Wing UAV},
journal = {International Journal of Robust and Nonlinear Control},
volume = {35},
number = {4},
pages = {1596-1615},
doi = {https://doi.org/10.1002/rnc.7743},
year = {2025}
}

@INPROCEEDINGS{Oudin2017,
author = {Simon Oudin},
title = {Low Speed Protections for a Commercial Airliner: a Practical Approach},
booktitle = {AIAA Guidance, Navigation, and Control Conference},
chapter = {},
pages = {},
year      = {2017},
doi = {10.2514/6.2017-1023}
}

@misc{Fisher2026,
      title={An Error-Based Safety Buffer for Safe Adaptive Control (Extended Version)}, 
      author={Peter A. Fisher and Johannes Autenrieb and Anuradha M. Annaswamy},
      year={2026},
      eprint={2510.23491},
      archivePrefix={arXiv},
      primaryClass={eess.SY},
      url={https://arxiv.org/abs/2510.23491}, 
}

@article{Garone2017,
title = {Reference and command governors for systems with constraints: A survey on theory and applications},
journal = {Automatica},
volume = {75},
pages = {306-328},
year = {2017},
issn = {0005-1098},
doi = {https://doi.org/10.1016/j.automatica.2016.08.013},
author = {Emanuele Garone and Stefano {Di Cairano} and Ilya Kolmanovsky}
}

@INPROCEEDINGS{Kolmanovsky2014,
  author={Kolmanovsky, Ilya and Garone, Emanuele and Di Cairano, Stefano},
  booktitle={2014 American Control Conference}, 
  title={Reference and command governors: A tutorial on their theory and automotive applications}, 
  year={2014},
  volume={},
  number={},
  pages={226-241},
  doi={10.1109/ACC.2014.6859176}}

@ARTICLE{Autenrieb2024,
  title     = "Flight control design for a hypersonic waverider configuration: A
               non-linear model following control approach",
  author    = "Autenrieb, Johannes and Fezans, Nicolas",
  journal   = "CEAS Space Journal",
  publisher = "Springer Science",
  pages     = "1--24",
  month     =  apr,
  year      =  2024,
  language  = "en"
}

@INPROCEEDINGS{Autenrieb2023b,
  author={Autenrieb, Johannes and Annaswamy, Anuradha},
  booktitle={2023 62nd IEEE Conference on Decision and Control (CDC)}, 
  title={Safe and Stable Adaptive Control for a Class of Dynamic Systems}, 
  year={2023},
  volume={},
  number={},
  pages={5059-5066},
  doi={10.1109/CDC49753.2023.10383779}
}

@article{XU2015,
title = {Robustness of Control Barrier Functions for Safety Critical Control},
journal = {IFAC-PapersOnLine},
volume = {48},
number = {27},
pages = {54-61},
year = {2015},
note = {{A}nalysis and {D}esign of {H}ybrid {S}ystems (ADHS)},
issn = {2405-8963},
doi = {https://doi.org/10.1016/j.ifacol.2015.11.152},
author = {Xiangru Xu and Paulo Tabuada and Jessy W. Grizzle and Aaron D. Ames}
}

@ARTICLE{Hwang_2017,
author = {{Hwang}, Donghyeok and {Tahk}, Min-Jea},
title = "{The Inverse Optimal Control Problem for a Three-Loop Missile Autopilot}",
journal = {International Journal of Aeronautical and Space Sciences},
year = 2018,
month = jun,
volume = {19},
number = {2},
doi = {10.1007/s42405-018-0014-6     },
note = {Provided by the SAO/NASA Astrophysics Data System}
}

@article{Nagumo_1942,
title={{\"U}ber die Lage der Integralkurven gew{\"o}hnlicher Differentialgleichungen},
author={Mitio Nagumo},
journal={Proceedings of the Physico-Mathematical Society of Japan. 3rd Series},
volume={24},
number={ },
pages={551-559},
year={1942},
doi={10.11429/ppmsj1919.24.0_551}
}

@inproceedings{Seo2017,
  author    = {Yongjun Seo and Youdan Kim},
  title     = {Robust Control Augmentation System for Flight Envelope Protection Using Backstepping Control Scheme},
  booktitle = {7th European Conference for Aeronautics and Aerospace Sciences (EUCASS)},
  year      = {2017},
  address   = {Milan, Italy},
  publisher = {EUCASS},
  doi       = {10.13009/EUCASS2017-192}
}

@inproceedings{Steffensen2019,
    author = {Steffensen, Rasmus and Gabrys, Agnes and Holzapfel, Florian},
    title = {Flight Envelope Protections Using Phase Plane Limits and Backstepping Control},
    booktitle = {CEAS EuroGNC Conference 2019},
    address = {Milan, Italy},
    month = apr,
    year = {2019},
    note = {CEAS-GNC-2019-003}
}

@article{Lombaerts2017,
author = {Lombaerts, Thomas and Looye, Gertjan and Ellerbroek, Joost and Martin, Mitchell},
year = {2017},
month = {08},
number = {8},
pages = {1902-1924},
title = {Design and Piloted Simulator Evaluation of Adaptive Safe Flight Envelope Protection Algorithm},
volume = {40},
journal = {Journal of Guidance Control and Dynamics},
doi = {10.2514/1.G002525}
}

@inproceedings{Tang2009,
author = {Tang, Liang and Roemer, Michael and Ge, Jianhua and Crassidis, Agamemnon and Prasad, Jvr and Belcastro, Christine},
year = {2009},
month = {08},
pages = {},
title = {Methodologies for Adaptive Flight Envelope Estimation and Protection},
isbn = {978-1-60086-978-5},
booktitle = {AIAA Guidance, Navigation, and Control Conference and Exhibit},
doi = {10.2514/6.2009-6260}
}

@article{Falkena2010,
author = {Falkena, Wouter and Borst, Clark and Mulder, J.A.},
year = {2010},
month = {08},
pages = {},
title = {Investigation of Practical Flight Envelope Protection Systems for Small Aircraft},
volume = {34},
isbn = {978-1-60086-962-4},
journal = {Journal of Guidance, Control, and Dynamics},
doi = {10.2514/6.2010-7701}
}

@INPROCEEDINGS{Autenrieb2025ACC,
  author={Autenrieb, Johannes and Shin, Hyo-Sang},
  booktitle={2025 American Control Conference (ACC)}, 
  title={Sensor-Based Safety-Critical Control Using an Incremental Control Barrier Function Formulation via Reduced-Order Approximate Models}, 
  year={2025},
  volume={},
  number={},
  pages={374-381},
  doi={10.23919/ACC63710.2025.11107913}}

@article{Autenrieb_2025,
author = {Autenrieb, Johannes},
title = {Quadratic Programming Approach to Flight Envelope Protection Using Control Barrier Functions},
journal = {Journal of Guidance, Control, and Dynamics},
volume = {0},
number = {0},
pages = {1-12},
year = {0},
doi = {10.2514/1.G009203}

}

@inproceedings{apkarian2014multi,
  title={Multi-model, multi-objective tuning of fixed-structure controllers},
  author={Apkarian, Pierre and Gahinet, Pascal and Buhr, Craig},
  booktitle={2014 European Control Conference (ECC)},
  pages={856--861},
  year={2014},
  organization={IEEE}
}
